\DeclareMathAlphabet{\mathpzc}{OT1}{pzc}{m}{it}
\newcommand*{\rom}[1]{\expandafter\@slowromancap\romannumeral #1@}
\newtheorem{theorem}{Theorem}
\begin{document}
\title{Effective Capacity for Renewal Service Processes with Applications to HARQ Systems}
\author{Zheng~Shi,
    Theodoros Tsiftsis,
    Weiqiang~Tan,
    Guanghua~Yang,
    Shaodan~Ma,
    and Mohamed-Slim Alouini
\thanks{Zheng Shi, Theodoros Tsiftsis and Guanghua Yang are with the School of Electrical and Information Engineering, Jinan University, Zhuhai 519070, China (e-mails: shizheng0124@gmail.com, theodoros.tsiftsis@gmail.com, ghyang@jnu.edu.cn).  
}
\thanks{Weiqiang~Tan is with the School of Computer Science and Educational Software, Guangzhou University, Guangzhou 510006, China (e-mail:wqtan@gzhu.edu.cn).
}
\thanks{Shaodan~Ma is with the Department of Electrical and Computer Engineering, University of Macau, Macao S.A.R., China (e-mail: shaodanma@umac.mo).
}
\thanks{Mohamed-Slim Alouini is with CEMSE Division, King Abdullah University of Science and Technology, Thuwal 23955-6900, Saudi Arabia (e-mail:slim.alouini@kaust.edu.sa).
}
}
\maketitle
\begin{abstract}
Considering the widespread use of effective capacity in cross-layer design and the extensive existence of renewal service processes in communication networks, this paper thoroughly investigates the effective capacity for renewal processes. Based on Z-transform, we derive exact analytical expressions for the effective capacity at a given quality of service (QoS) exponent for both the renewal processes with constant reward and with variable rewards. Unlike prior literature that the effective capacity is approximated with no many insightful discussions, our expression is simple and reveals further meaningful results, such as the monotonicity and bounds of effective capacity. The analytical results are then applied to evaluate the cross-layer throughput for diverse hybrid automatic repeat request (HARQ) systems, including fixed-rate HARQ (FR-HARQ, e.g., Type I HARQ, HARQ with chase combining (HARQ-CC) and HARQ with incremental redundancy (HARQ-IR)), variable-rate HARQ (VR-HARQ) and cross-packet HARQ (XP-HARQ). 
Numerical results corroborate the analytical ones and prove the superiority of our proposed approach. Furthermore, targeting at maximizing the effective capacity via the optimal rate selection, it is revealed that VR-HARQ and XP-HARQ attain almost the same performance, and both of them perform better than FR-HARQ.
\end{abstract}

\begin{IEEEkeywords}
Effective capacity, hybrid automatic repeat request, QoS exponent, renewal process.
\end{IEEEkeywords}
\IEEEpeerreviewmaketitle
\section{Introduction}\label{sec:sys_mod}

\IEEEPARstart{5}{G} has been envisioned to offer extremely high data rate (on the order of Gbps), ultra-reliability (higher than 99.999\%) and very low latency (sub-1ms)\cite{popovski2014ultra}. 
To fulfill these demanding requirements, plenty of existing works prefer to evaluate and devise communication systems from physical-layer perspective. However, the constraints of link-layer quality of service (QoS) were rarely considered in the literature, such as queue length limitation and maximum allowable delay \cite{wu2003effective}. Unfortunately, physical-layer models can not capture the characteristics of these QoS requirements, which depends on the queueing behavior of the connection model. Ignoring these QoS limitations grossly overestimates the network performance \cite{to2015power}, and many QoS-constrained applications cannot be supported \cite{phan2016optimal}. Therefore, it is imperative to come up with a cross-layer performance metric that combines physical-layer parameters and QoS requirements together.

To address the aforementioned issue, the concept of effective capacity was developed in \cite{wu2003effective} initially by considering the finite buffer size in practice. This concept has been extensively employed to evaluate the maximum supportable arrival rate given a QoS exponent, where the QoS exponent affects the statistical QoS constraints, including buffer overflow probability and delay-violation probability. Furthermore, the effective capacity enables us to optimally design cross-layer parameters for various wireless systems subject to statistical QoS constraints \cite{abrao2017achieving,hu2018optimal}. However, most of the relevant literature assumed perfect knowledge of channel state information (CSI) at the transmitter, which is an impractical assumption due to the unpredictable noise, quantization errors, etc. 
Particularly, in the absence of perfect CSI, the retransmission technique of hybrid automatic repeat request (HARQ) is frequently utilized to enhance the system reliability. Nonetheless, the introduction of HARQ would make the queueing behaviour of the connection vastly involved, and consequently impedes the analysis of effective capacity given diverse QoS requirements. Instead, in \cite{choi2012large}, large deviation was adopted to convert the physical-layer throughput of HARQ-IR into effective capacity approximately. Moreover, the concept of effective throughput was developed to bypass the complex effective capacity in \cite{to2015power}. In \cite{li2015throughput}, an accurate approximation of effective capacity under small QoS exponent was obtained on the basis of the cumulants of renewal processes. Whereas, the concept of the effective capacity in \cite{li2015throughput} represents the maximum arrival rate that can be supported by HARQ systems. No matter whether the conveyed message can be recovered by receiver or not, every HARQ cycle will be counted as a success. Obviously, it will overestimate the link-layer throughput particularly for high probability of the decoding failures. Therefore, only the goodput of HARQ systems was considered into the formulation of the effective capacity in \cite{larsson2016effective}, and the effective capacity was obtained by using the recurrence relation approach. The similar results were further extended to examine the outage effective capacity of the buffer-aided diamond relay systems in \cite{qiao2016outage}. However in both \cite{larsson2016effective,qiao2016outage}, the analytical results are only applicable to fixed-rate HARQ (FR-HARQ) schemes, wherein the transmission rates remain constant during all HARQ rounds.

Unfortunately so far, the effective capacities of more advanced and complex HARQ schemes, e.g., variable-rate HARQ (VR-HARQ) \cite{szczecinski2013rate} and cross-packet HARQ (XP-HARQ) \cite{jabi2017adaptive}, have never been investigated due to the challenge of analyzing more complicated service process. Moreover, even if the closed-form expressions have been derived for the effective capacity of the conventional HARQ schemes \cite{larsson2016effective,qiao2016outage}, the complex expressions of the effective capacity provided little insights and it is also difficult to extend the analytical results to the general case. Hence, they will impede the effective cross-layer design of HARQ systems to further enhance the system performance. To combat this issue and generalize the analytical results, we notice that the HARQ transmission model can actually be described by a renewal reward process \cite{caire2001throughput}. Specifically, the event that the transmitter halts HARQ transmissions for the current message is recognized to be a renewal, and the number of the transmitted information bits reflects the reward received from the renewal. In this paper, we first derive a simple exact expression for the effective capacity of the network services that follows constant reward renewal process. The results are further extended to the general renewal process with variable rewards. The simple analytical expressions not only offer accurate approximation for the effective capacity under small QoS exponent, but also facilitate the extraction of further meaningful insights. In particular, the effective capacity decreases with the QoS exponent and is bounded. The analytical results are then applied to calculate link-layer throughputs for different HARQ systems, including the conventional FR-HARQ (e.g., Type I HARQ, HARQ with chase combining (HARQ-CC), HARQ with incremental redundancy (HARQ-IR)), VR-HARQ and XP-HARQ. Numerical examples are finally presented to confirm the proposed approach compared with the already existing ones. 
Furthermore, aiming to maximize the effective capacity through the optimal rate selection, the numerical results reveal that XP-HARQ and VR-HARQ reach almost the same performance in terms of the optimal effective capacity. 



The remainder of this paper is structured as follows. Section \ref{sec:pre} presents preliminaries on effective capacity and HARQ schemes. In Section \ref{sec:ana}, the effective capacity for the renewal process with constant reward is derived by means of Laplace transform and Z-transform, respectively. Section \ref{sec:ec_gen} then extends the results to the general renewal reward process. The analytical results are further applied to evaluate the link-layer throughput of various HARQ systems in Section \ref{sec:appli}. Numerical results are presented for verifications and discussions in Section \ref{sec:ver}. Section \ref{sec:con} finally concludes this paper.
\section{Preliminaries}\label{sec:pre}
\begin{figure*}[!t]
  \centering
  \includegraphics[height=2.3in]{./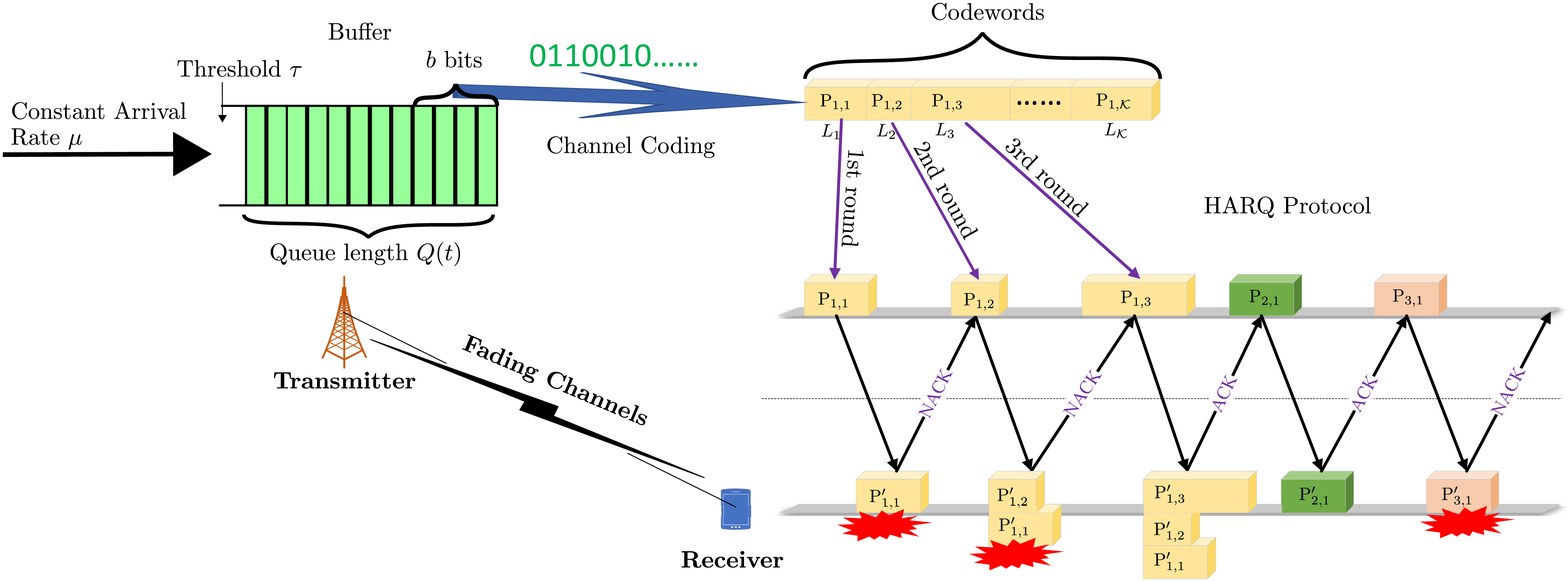}
  \caption{Cross-layer model for the buffer-limited HARQ System.}\label{fig:sys_mod}
\end{figure*}
\subsection{Effective Capacity}
The cross-layer model of the typical HARQ system with limited buffer is shown in Fig. \ref{fig:sys_mod}. By assuming a constant arrival rate $\mu$, buffer overflows will happen if the queue length $Q$ exceeds the buffer threshold $\tau$, and it can not be avoided under limited buffer size. Based on the theory of large deviations, the buffer overflow probability can be approximated as ${\Pr \left( {Q \ge \tau } \right)} \approx e^{-\theta \tau}$ for large values of $\tau$ \cite{chang1995effective}, where $Q$ and $\tau$ stand for the stationary queue length and buffer's threshold, respectively. It is worth noting that the QoS exponent $\theta$ plays a critical role in connecting the physical and link layers. The QoS exponent $\theta$ implies how fast the buffer overflow probability ${\Pr \left( {Q \ge \tau } \right)}$ decays exponentially with $\tau$. Accordingly, once a constraint is imposed on the buffer overflow probability, the QoS exponent $\theta$ can be determined as $\theta  \approx {{-\ln \left( {\Pr \left( {Q \ge \tau } \right)} \right)}}/{\tau }$. To meet the requirement of the QoS exponent, the constant arrival rate $\mu$ should be properly chosen. To this end, the concept of the effective capacity was developed to specify the maximum supportable arrival rate, i.e., $C_e = \max \{\mu\}$ \cite{wu2003effective}. Assume that the service process satisfies the G\"artner-Ellis theorem \cite{chang1995effective}, given the QoS exponent $\theta$, the effective capacity is explicitly given by the limit \cite{wu2003effective}
\begin{equation}\label{eqn:mod_ec}
{C_e} =  - \mathop {\lim }\limits_{t \to \infty } \frac{1}{{\theta t}}\ln \mathbb E\left\{ {{e^{ - \theta S_t}}} \right\},
\end{equation}
where $\mathbb E\{\cdot\}$ is the expectation operator and $S_t$ is the time-accumulated service process representing the total amount of reward received until time $t$. Moreover, the definition of the effective capacity was further expanded to the case with finite time $t$ in \cite{larsson2016effective}, i.e., ${C_{e,t}} =  - \frac{1}{{\theta t}}\ln \mathbb E\left\{ {{e^{ - \theta S_t}}} \right\}$.
\subsection{HARQ Schemes}
This paper offers a through analysis of the effective capacity for various HARQ systems. As a widely used reliable transmission technique, HARQ leverages both the forward error correction coding and automatic repeat request such that the currently received packet could be combined with the erroneously received packets to diminish the probability of decoding failures. As shown in Fig. \ref{fig:sys_mod}, according to the HARQ mechanism, the delivered message is first encoded into a long codeword, and the generated codeword is then broken into multiple subcodewords. These subcodewords will be sent sequentially in different HARQ rounds upon the reception of negative acknowledgment (NACK) messages, while the feedback of the positive acknowledgment (ACK) messages notifies the transmitter about the initiation of a new HARQ process for the next information message. On the basis of whether all the subcodeword lengths are fixed or not, the HARQ scheme can be classified into FR-HARQ and VR-HARQ. According to different encoding/decoding operations performed at the transceiver, the conventional FR-HARQ can be further categorized into three types, i.e., Type I HARQ, HARQ-CC and HARQ-IR \cite{caire2001throughput}. To be specific, Type I HARQ does not require the aid of the buffer at the receiver because the failed packets are directly discarded. Whereas, both HARQ-CC and HARQ-IR have the dedicated buffer to store the erroneously received packets, and their major difference lies in that diversity combining and code combining are employed for the decodings of HARQ-CC and HARQ-IR, respectively. In contrast with the conventional FR-HARQ, VR-HARQ assumes the variable lengths of the subcodewords for further throughput enhancement \cite{szczecinski2013rate}. It is worth noting that both FR- and VR-HARQ do not include new information bits into retransmissions, they may yield the waste of mutual information. To overcome this issue, VR-HARQ is proposed to add new information bits to retransmissions for possible redundant mutual information \cite{jabi2017adaptive}.

Note that the cross-layer HARQ system can be modelled by using the renewal reward process\cite{caire2001throughput}. Specifically, the event that HARQ retransmissions stop for the current message is treated as a renewal, and each time a renewal takes place we receive a reward, which is the total amount of the delivered information bits. Most of the network services obey renewal reward processes, which incorporate the HARQ channel service as a special case. Therefore, to ease extension, the effective capacity of the renewal service process with constant reward is derived first in Section \ref{sec:ana}. A unified closed-form expression of the effective capacity is then obtained for the complex renewal service process with variable rewards in Section \ref{sec:ec_gen}. The analytical results are then employed to obtain the effective capacity of various HARQ schemes. For the simplification of the analysis, the receptions of the HARQ feedback messages are assumed to be error- and delay-free.

\subsection{Two Useful Theorems}
To start with the analysis, two important theorems will be repeatedly used in the sequel. The first theorem is utilized to derive the effective capacity in a matrix form.
\begin{theorem}\label{the:inv_z_mat}
If $z_1,\cdots,z_K$ are distinct, the zero points of the function $\phi(z)$ does not coincide with $z_1,\cdots,z_K$, and $\phi(z)$ has no poles. The inverse Z-transform of $\psi(z) \triangleq {{\phi\left( z \right)}}/{{\prod\nolimits_{k = 1}^K {\left( {z - {z_k}} \right)} }}$ can be written in a compact form as
\begin{equation}\label{eqn:z_trans_fz}
{\cal Z}^{-1}\left\{ {{\psi}\left( z \right)} \right\}\left( t \right) = \frac{1}{{2\pi {\rm{i}}}}\oint_{\cal C} {\psi(z){z^{t - 1}}dz}  = \frac{{\det {\bf{F}}}}{{\det {\bf{Z}}}},
\end{equation}
where ${\rm i} = \sqrt{-1}$, $\mathcal Z^{-1}(\cdot)$ denotes the inverse Z-transform, ${\mathcal C}$ is the contour path of the integration which encircles all of the poles $z_1,\cdots,z_K$,
${\bf{F}} = \left[ \left({z_i}^{j-1}\right)_{i,j} \right]$ is a Vandermonde matrix of size ${K\times K}$ and ${\bf{Z}}=\left[ \left({z_i}^{j-1}\right)_{i,1\le j\le K-1}, \left({{z_i}^{t-1}}{\phi(z_i)}\right)_{i,K} \right]$ with size ${K\times K}$ and $\det \cdot$ refers to the determinant operation.
\end{theorem}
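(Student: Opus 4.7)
The plan is to establish the identity in two stages: first evaluate the contour integral via the residue theorem, then recast the resulting partial-fraction sum as a ratio of determinants by cofactor expansion of Vandermonde-type matrices.

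First, I would use the hypotheses to analyse the integrand. Since $\phi(z)$ has no poles, the $z_k$ are distinct, and $\phi(z_k)\neq 0$, the function $\psi(z)z^{t-1}$ has only simple poles at $z_1,\dots,z_K$ inside $\mathcal{C}$. Applying Cauchy's residue theorem and computing each simple-pole residue yields
\begin{equation*}
\frac{1}{2\pi\mathrm{i}}\oint_{\mathcal C}\frac{\phi(z)z^{t-1}}{\prod_{k=1}^K(z-z_k)}\,dz=\sum_{k=1}^K \frac{\phi(z_k)\,z_k^{t-1}}{\prod_{j\neq k}(z_k-z_j)}.
\end{equation*}
This closed-form partial-fraction sum is the natural intermediate object; the remaining task is purely algebraic.

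Second, I would identify this sum with the claimed ratio of determinants. Expanding $\det\mathbf{Z}$ along its last column (whose entries are $z_i^{t-1}\phi(z_i)$) gives $\det\mathbf{Z}=\sum_{k=1}^K (-1)^{k+K}\phi(z_k)z_k^{t-1}\,V_k$, where the minor $V_k$ is the determinant of the $(K-1)\times(K-1)$ Vandermonde matrix on the nodes $\{z_j:j\neq k\}$, i.e.\ $V_k=\prod_{i<j,\,i,j\neq k}(z_j-z_i)$. Comparing to $\det\mathbf{F}=\prod_{i<j}(z_j-z_i)$, the surplus factors are those pairs involving index $k$, namely $\prod_{i<k}(z_k-z_i)\prod_{j>k}(z_j-z_k)$; reversing the order in the second product introduces a sign $(-1)^{K-k}$. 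Combining with the cofactor sign $(-1)^{k+K}$ yields $(-1)^{2K}=1$, and the surviving products collapse to $\prod_{j\neq k}(z_k-z_j)$ in the denominator. After this cancellation the expansion matches, termwise, the residue sum above, which establishes the identity (up to checking the stated orientation of the $\det\mathbf{F}/\det\mathbf{Z}$ vs.\ $\det\mathbf{Z}/\det\mathbf{F}$ ratio against how $\psi(z)$ is normalised elsewhere in the paper).

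The residue calculation and the recognition of the minor as a smaller Vandermonde are routine; the main obstacle, and the only place real care is needed, is the sign bookkeeping in the second stage. A single miscounted alternating sign—either from the cofactor expansion $(-1)^{k+K}$ or from reordering $\prod_{j>k}(z_j-z_k)$ into $\prod_{j>k}(z_k-z_j)$—would flip the identity, so I would verify the parity by checking the small case $K=2$ explicitly before declaring the general proof complete.
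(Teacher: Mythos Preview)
Your approach is essentially identical to the paper's: evaluate the contour integral by the residue theorem to obtain the sum $\sum_{k}\phi(z_k)z_k^{t-1}/\prod_{j\neq k}(z_k-z_j)$, then recognise this as a cofactor (Laplace) expansion along the last column, with the minors being $(K-1)\times(K-1)$ Vandermonde determinants. Your caveat about the orientation of the ratio is well taken: the residue sum in fact equals $\det\mathbf{Z}/\det\mathbf{F}$ (special column on top, Vandermonde below), which is how the result is actually applied later in the paper (e.g.\ in~\eqref{eqn:mgf_rew_simpres_re1} with $\det\mathbf{B}/\det\mathbf{A}$); the labels $\mathbf{F}$ and $\mathbf{Z}$ in the theorem statement and its appendix proof are swapped.
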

\begin{proof}
Please see Appendix \ref{app:inv_z_mat}.
\end{proof}
The second theorem is invoked to prove the monotonicity of the effective capacity with respect to the QoS exponent $\theta$.
\begin{theorem}\label{the:dec_mon}
If $f(x)$ is twice differentiable and $f(0)=0$, $\eta(x) = f(x)/x$ is increasing whenever $f^{\prime\prime} (x)\ge 0$ and decreasing otherwise.
\end{theorem}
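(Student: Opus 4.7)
The plan is to reduce the monotonicity of $\eta(x)=f(x)/x$ to the sign of a single auxiliary function via the quotient rule and then use the hypothesis $f(0)=0$ as a boundary condition. First I would compute
\begin{equation}
\eta'(x) = \frac{xf'(x) - f(x)}{x^2},
\end{equation}
so that $\mathrm{sgn}(\eta'(x))$ agrees with $\mathrm{sgn}(g(x))$, where $g(x) \triangleq xf'(x)-f(x)$. The twice-differentiability of $f$ ensures that $g$ is differentiable, and the nice cancellation
\begin{equation}
g'(x) = f'(x) + xf''(x) - f'(x) = xf''(x)
\end{equation}
is what makes the statement work: the sign of $g'$ is controlled directly by the sign of $f''$ (for $x>0$).

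Next I would exploit $f(0)=0$, which gives $g(0)=0\cdot f'(0)-f(0)=0$. Assume $f''(x)\ge 0$ on $x>0$; then $g'(x)\ge 0$ there, so $g$ is non-decreasing on $[0,\infty)$, and combined with $g(0)=0$ this yields $g(x)\ge 0$, hence $\eta'(x)\ge 0$, i.e.\ $\eta$ is increasing. The decreasing case follows by the same argument applied to $-f$, or equivalently by reversing each inequality when $f''(x)\le 0$.

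There is essentially no real obstacle here; the only subtle point worth stating carefully in the write-up is the domain of validity. The argument as given works on $x>0$ (which is the regime the QoS-exponent application in the paper needs), and for $x<0$ one would have to flip the inequalities because $g'(x)=xf''(x)$ changes sign with $x$. I would therefore phrase the final conclusion on $x>0$ only, and mention that the same proof applies verbatim on $x<0$ after adjusting signs, which is sufficient for the use made of Theorem~\ref{the:dec_mon} later in proving monotonicity of the effective capacity in $\theta>0$.
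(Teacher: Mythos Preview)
Your argument is correct and takes a genuinely different route from the paper. The paper proves Theorem~\ref{the:dec_mon} by applying the mean value theorem twice: first to write $f(x)/x=(f(x)-f(0))/(x-0)=f'(\alpha_1 x)$ for some $\alpha_1\in(0,1)$, and then to the difference $f'(x)-f'(\alpha_1 x)$, arriving at $\eta'(x)=(1-\alpha_1)\,f''\bigl(\alpha_1 x+\alpha_2(1-\alpha_1)x\bigr)$, whose sign is that of $f''$ at an intermediate point. Your approach instead isolates the numerator $g(x)=xf'(x)-f(x)$, observes the clean identity $g'(x)=xf''(x)$, and uses the boundary value $g(0)=0$ together with monotonicity of $g$. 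Both arguments implicitly work on $x>0$; your write-up is more explicit about this domain restriction and about how the sign flips for $x<0$, which is a useful clarification. The auxiliary-function approach is slightly more elementary (no MVT needed) and makes the role of the hypothesis $f(0)=0$ completely transparent, while the paper's double-MVT argument has the minor advantage of exhibiting $\eta'(x)$ directly as a nonnegative multiple of $f''$ at a single point. Either proof is perfectly adequate for the application to the monotonicity of $C_e$ in $\theta$.
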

\begin{proof}
Please see Appendix \ref{app:dec_mon}.
\end{proof}
\section{Effective Capacity of the Renewal Process with Constant Reward}\label{sec:ana}
If each time a renewal occurs with a constant reward $R$, the total amount of the accumulated reward is $S_t = R N_t$, where $N_t$ denotes the renewal counting process. More precisely, the counting process $N_t$ is defined as ${N_t} = \max \left\{ {n:{\mathcal S_n} \le t} \right\}$, where ${\mathcal S_n} = \sum\nolimits_{i = 1}^n {{X_i}} $ and $\{X_i,i\in \mathbb N\}$ is an independent and identically distributed (i.i.d.) sequence of interarrival times. Accordingly, the effective capacity of the renewal service process reduces to \cite{li2015throughput}
\begin{equation}\label{eqn:effcap_def}
{C_e} =  - \mathop {\lim }\limits_{t \to \infty } \frac{1}{{\theta t}}\ln {\mathbb{E}}\left\{ {{e^{ - \theta  R{N_t}}}} \right\}.
\end{equation}
We denote by $f_X(x)$ the probability density function (PDF) of each $X_i$. Moreover, ${\mathbb{E}}\left\{ {{e^{ - \theta R {N_t}}}} \right\}$ is the expectation taken over $N_t$, such that
\begin{align}\label{eqn:mgf_e_d}
{\mathbb{E}}\left\{ {{e^{ - \theta R {N_t}}}} \right\}&= \sum\limits_{n = 0}^\infty  {{e^{ - \theta Rn}}\Pr \left( {{N_t} = n} \right)},
\end{align}
and $\Pr \left( {{N_t} = n} \right)$ is given by
\begin{align}\label{eqn:Nt_pmf}
\Pr \left( {{N_t} = n} \right) &= \Pr \left\{ {{\mathcal S_n} \le t} \right\} - \Pr \left\{ {{\mathcal S_{n + 1}} \le t} \right\}\notag\\
 &= {F_n}\left( t \right) - {F_{n + 1}}\left( t \right),
\end{align}
where ${F_n}\left( t \right)$ represents the cumulative distribution function (CDF) of $\mathcal S_n$, i.e.,
\begin{equation}\label{eqn:F_k_def}
{F_n}\left( t \right) = \Pr \left\{ {{\mathcal S_n} \le t} \right\} = \Pr \left\{ {\sum\limits_{i = 1}^n {{X_i}}  \le t} \right\}.
\end{equation}
\subsection{Laplace Transform-Based Analysis}
The independence among $X_i$'s motivates us to calculate \eqref{eqn:F_k_def} through Laplace transform. Denote by $\mathcal F(s)$ the Laplace transform of $X$, i.e., $\mathcal F(s) = \mathbb E(e^{-sX})=\int\nolimits_0^\infty e^{-sx}f_X(x)dx$. By applying the convolutional property of Laplace transform to \eqref{eqn:F_k_def}, it follows that ${F_n}\left( t \right) = {\mathcal L^{ - 1}}\left\{ {{\left({{\mathcal F}\left( s \right)}\right)^n}/{s}} \right\}(t)$,
where ${\mathcal L^{ - 1}}$ stands for the operator of the inverse Laplace transform. By substituting this result into \eqref{eqn:Nt_pmf}, we get $\Pr \left( {{N_t} = n} \right) = {\mathcal L^{ - 1}}\left\{ {{\left({{\mathcal F}\left( s \right)}\right)^n(1-\mathcal F(s))}/{s}} \right\}(t)$. 
Putting this result into \eqref{eqn:mgf_e_d} together with the formula of the sum of a geometric series, we reach
\begin{equation}\label{eqn:mgf_e_lp}
{\mathbb{E}}\left\{ {{e^{ - \theta R {N_t}}}} \right\} 
={\mathcal L^{ - 1}}\left\{ {\frac{{1 - \mathcal F\left( s \right)}}{{s\left( {1 - {e^{ - \theta R}}\mathcal F\left( s \right)} \right)}}} \right\}\left( t \right),
\end{equation}
where $\left| {{e^{ - \theta R }}\mathcal F\left( s \right)} \right| < 1$. By substituting \eqref{eqn:mgf_e_lp} into \eqref{eqn:effcap_def}, the evaluation of the effective capacity can thus be enabled. By taking the Poisson service process as an example, $N_t$ follows a Poisson distribution with mean $\lambda t$, $\mathbb E\left\{ {{e^{ - \theta R{N_t}}}} \right\} = \sum\nolimits_{k = 0}^\infty  {{e^{ - \theta Rk}}{e^{ - \lambda t }}{(\lambda t) ^k}/k!}={\exp{\left( - \lambda t  + \lambda t{e^{ - \theta R}} \right)}}$. On the other hand, the interarrival time is found to be exponentially distributed with mean $1/(\lambda t)$. Thus, it follows that $\mathcal F\left( s \right) = \lambda t/(s+\lambda t)$. By using \eqref{eqn:mgf_e_lp}, ${\mathbb{E}}\left\{ {{e^{ - \theta R {N_t}}}} \right\} = {\mathcal L^{ - 1}}\left\{ {{{1}/{{\left(s + \lambda t  - \lambda t {e^{ - \theta R}}\right)}}}} \right\}\left( t \right)={\exp{\left( - \lambda t  + \lambda t{e^{ - \theta R}} \right)}}$, which consequently justifies the analytical result.

Unfortunately, the application of Laplace transform to HARQ systems ($F(s) $ is written in the form of $ \sum q_k e^{-sk}$) usually results in an infinite number of poles in \eqref{eqn:mgf_e_lp}, and most of the time there is no closed-form expression for the effective capacity. Moreover, the computation of \eqref{eqn:mgf_e_lp} relying on numerical methods entails higher complexity and accuracy due to the calculation of $\exp(-sk)$ for extremely high $s$. These facts hinder the thorough analysis of the effective capacity. However, if the sequence $\{X_i,i\in \mathbb N\}$ is a discrete renewal process (e.g., HARQ channel service), Z-transform can be applied to investigate the effective capacity for more insights.

\subsection{Z-Transform-Based Analysis}\label{sec:zt}
If $X_1,X_2,\cdots,X_i,\cdots$ are i.i.d. discrete and non-negative integer random variables with the probability mass function (pmf) $\Pr(X_i=k)=q_k,\,k\in[0,K]$, Z-transform of the distribution of $X_i$ is expressed as ${\mathcal X}(z) = \sum\nolimits_{k = 0}^K {{q_k}{z^{ - k}}}$. Similarly, by applying the convolutional property of Z-transform to \eqref{eqn:F_k_def}, we have ${F_n}\left( t \right) = \mathcal Z^{-1}\left\{{(\mathcal X(z))}^n{z}/{(z-1)}\right\}$. 
Combining the latter with \eqref{eqn:Nt_pmf} leads to $ \Pr \left( {{N_t} = n} \right) = {{\cal Z}^{ - 1}}\left\{ {\frac{z}{{z - 1}}{{({\cal X}(z))}^n}\left( {1 - {\cal X}(z)} \right)} \right\}$. 
From \eqref{eqn:mgf_e_d}, ${\mathbb{E}}\left\{ {{e^{ - \theta R {N_t}}}} \right\}$ can be rewritten as
\begin{align}\label{eqn:mgf_nt_rew}
{\mathbb{E}}\left\{ {{e^{ - \theta R {N_t}}}} \right\} 
  = \frac{1}{{2\pi {\rm{i}}}}\oint\nolimits_{\mathcal C} {\frac{{z\left( {1 - {\cal X}(z)} \right)}}{{\left( {z - 1} \right)\left( {1 - {e^{ - \theta R  }}{\cal X}(z)} \right)}}{z^{t - 1}}dz}.
\end{align}
where $\left| {{e^{ - \theta R }}{\cal X}(z)} \right| < 1$.
By using the definition of ${\cal X}(z)$, ${\mathbb{E}}\left\{ {{e^{ - \theta R {N_t}}}} \right\}$ can be further expressed as
\begin{align}\label{eqn:mgf_nt_inv_sim1}
{\mathbb{E}}\left\{ {{e^{ - \theta R {N_t}}}} \right\}
 =\frac{ {1 - e ^{\theta R}} }{1-q_0e^{-\theta R}}\frac{1}{{2\pi {\rm{i}}}}\oint\nolimits_{\mathcal C} {\frac{\frac{z^{t+K}\left( {1 - {\cal X}(z)} \right)}{z - 1}}{\prod\limits_{ i\in \Phi} {\left( {z - {z_i}} \right)}}dz},
\end{align}
where $\Phi \triangleq \left\{ {{z_i}:{z_i}^K \left(1- {e^{ - \theta R  }}{\cal X}(z_i) \right) = 0}\right\}$. 
Clearly, $z_i \ne 0$ if $q_K \ne 0$, $\Phi$ is therefore equivalent to $\left\{ {{z_i}:{\cal X}(z_i) = e^{\theta R}}\right\}$. Furthermore, $0<|z_i|<1$ can be proved by applying triangle inequality to ${\cal X}(z_i) = e^{\theta R} $. We assume that $z_i$'s are distinct for analytical tractability. Thus the cardinality of $\Phi$ is $K$, ${z}^K \left(1- {e^{ - \theta R  }}{\cal X}(z) \right) = (1-q_0e^{-\theta R})\prod\nolimits_{k = 1}^K {\left( {z - {z_k}} \right)} $ then follows. The integrand in \eqref{eqn:mgf_nt_inv_sim1} has $K$ distinct poles $z_i \in \Phi$, and it is worthwhile to mention that $z=1$ is not a pole because $\lim\nolimits_{z\to 1}(1-{\cal X}(z))/(z-1) = {\rm constant}$. Accordingly, by using Theorem \ref{the:inv_z_mat},
${\mathbb{E}}\left\{ {{e^{ - \theta R {N_t}}}} \right\}$ is obtained as
\begin{align}\label{eqn:mgf_rew_simpres_re1}
 {\mathbb{E}}\left\{ {{e^{ - \theta R {N_t}}}} \right\} 
 = \frac{ {1 - e ^{\theta R}} }{1-q_0e^{-\theta R}}\frac{{\det {\bf{B}} }}{{\det{\bf{A}} }},
\end{align}
where ${\bf{A}} = \left[ \left({z_i}^{j-1}\right)_{i,j} \right]$ is a Vandermonde matrix of size ${K\times K}$ and ${\bf{B}}=\left[ \left({z_i}^{j-1}\right)_{i,1\le j\le K-1}, \left({{z_i}^{t+K}}/{(z_i-1)}\right)_{i,K} \right]$ with size ${K\times K}$. 
Plugging \eqref{eqn:mgf_rew_simpres_re1} into \eqref{eqn:effcap_def} leads to
\begin{align}\label{eqn:effcap_rew}
{C_e} &=
 - \mathop {\lim }\limits_{t \to \infty } \frac{\ln \frac{\left( {1 - {e^{\theta R} }} \right){\det {\bf{B}} }}{\left({1-q_0e^{-\theta R}}\right){\det {\bf{A}} }}}{{\theta t}}=  - \mathop {\lim }\limits_{t \to \infty } \frac{1}{{\theta t}}\ln \det {\bf{B}} . 
\end{align}
Without loss of generality, let $z_1$ be the largest in absolute value in $\Phi$, i.e., $|z_1|=\max\{{|z_i|},i\in [1,K]\}$, we have
\begin{align}\label{eqn:c_e_fin}
{C_e} &=  - \mathop {\lim }\limits_{t \to \infty } \frac{\ln {z_1}^{t + K}}{{\theta t}}\left| {\left({z_i}^{j-1}\right)_{i,1\le j\le K-1}, \left(\frac{{\left(\frac{z_i}{z_1}\right)}^{t+K}}{z_i-1}\right)_{i,K}} \right|\notag\\
 &=  \frac{{\ln {z_1}^{-1}}}{\theta }=\frac{\ln\zeta}{\theta},
\end{align}
where $\zeta \triangleq {z_1}^{-1}$. In the following, some discussions are expanded on the basis of \eqref{eqn:c_e_fin}.
\subsubsection{Calculation of $\zeta$}\label{sec:cal}
It is clear from \eqref{eqn:c_e_fin} that $\zeta \ge 1$, and $\zeta$ satisfies
\begin{equation}\label{eqn:z_1_root_eq}
\sum\limits_{k = 0}^K {{q_k}{{{\zeta}}^{k}}}  = {e^{\theta R}}.
\end{equation}
By applying Jensen's inequality to \eqref{eqn:z_1_root_eq}, we have
\begin{equation}\label{eqn:jesen_ineq}
\sum\limits_{k = 0}^K {{q_k}{{\zeta}^{ k}}}  \ge {\zeta^{ \sum\limits_{k = 0}^K {k{q_k}} }} = {{\zeta}^{ \mathbb E\left( X \right)}}.
\end{equation}
Thus, $\zeta \le {e^{\frac{\theta R }{{\mathbb E\left( X \right)}}}}$ follows. Note $\zeta$ is larger than or equal to 1, $\zeta$ belongs to $ [1,{e^{\frac{\theta R }{{\mathbb E\left( X \right)}}}}]$. The latter eases the calculation of $\zeta$. More specifically, let us define $g(x )\triangleq\sum\nolimits_{k = 0}^K {{q_k}{{x }^{ k}}}-{e^{\theta R}}$. Since $g(x )$ is an increasing function of $x $ if $x >0$, $\zeta$ is definitely the unique zero point of $g(x )$ within the range $x  \in [1,{e^{\frac{\theta R }{{\mathbb E\left( X \right)}}}}]$. Accordingly, the bisection method can be adopted to calculate $\zeta$.

\subsubsection{Approximation of ${C_e}$}\label{sec:app_ce_con}
If $\theta$ is sufficiently small, the effective capacity can be approximated by $C_e\approx{R}/{{\mathbb E\left( X \right)}}-{\theta R^2{{\rm Cov}\left( X \right)}}/\left({2{\mathbb E\left( X \right)}^3}\right)$ from \cite{li2015throughput}.
This can be further confirmed by using Taylor expansion of \eqref{eqn:c_e_fin}. To proceed, we define $u(\theta) = \ln \zeta$ and $h\left( u \right) = \sum\nolimits_{k = 0}^K {{q_k}{e^{ku}}} $ such that
\begin{equation}\label{eqn:u_h_re}
h\left( u(\theta) \right) = {e^{\theta R}}.
\end{equation}
By using Taylor series expansion, $u(\theta)$ can be expanded as
\begin{equation}\label{eqn:u_expan}
u(\theta ) = u\left( 0 \right) + u'\left( 0 \right)\theta  + \frac{{u''\left( 0 \right)}}{{2}}{\theta ^2}  + o\left( {{\theta ^2}} \right),
\end{equation}
where $o(\cdot)$ refers to the little-O notation, $u'$ and $u''$ denote the first and second derivatives of $u$ with respect to (w.r.t.) $\theta$, respectively. $u\left( 0 \right)$ and its derivatives can be determined as follows.

It is readily found from \eqref{eqn:u_h_re} that ${u\left( 0 \right)}=1$. Taking the first derivative w.r.t. $\theta$ at the both sides of \eqref{eqn:u_h_re} leads to $\frac{{dh}}{{du}}u'(\theta ) = R{e^{\theta R}}$.
With the definition of $h(u)$, $u'(\theta )$ can then be obtained as $u'(\theta )= {{R{e^{\theta R}}}}\left/
 {\vphantom {{} {}}}\right.{{\sum\nolimits_{k = 0}^K {{q_k}k{e^{ku\left( \theta  \right)}}} }}$.
Hence, $u'(0) = {R}/{{\mathbb E\left( X \right)}}$. Similarly, taking the first derivative of $u'(\theta )$ w.r.t. $\theta$ gives $u''(\theta )$ as
\begin{multline}\label{eqn:u_sec_der}
u''(\theta ) = {R^2}{e^{\theta R}}{{{{\left( {\sum\limits_{k = 0}^K {{q_k}k{e^{ku\left( \theta  \right)}}} } \right)}^{-3}}}}\times \\
\left({{{{\left( {\sum\limits_{k = 0}^K {{q_k}k{e^{ku\left( \theta  \right)}}} } \right)}^2} - \sum\limits_{k = 0}^K {{q_k}{e^{ku\left( \theta  \right)}}} \sum\limits_{k = 0}^K {{q_k}{k^2}{e^{ku\left( \theta  \right)}}} }}\right).
\end{multline}
Thus, $u''\left( 0 \right) = -{{R^2}{{\rm{Cov}}\left( X \right)}}/{{{{\left( {\mathbb E\left( X \right)} \right)}^3}}}$, where ${\rm{Cov}}(\cdot)$ denotes the covariance operator.
Substituting ${u\left( 0 \right)}$, $u'(0)$ and $u''(0)$ into \eqref{eqn:u_expan} together with the definition of $u(\theta)$ arrives at $\zeta \approx {\exp\left({{\theta R }/{{\mathbb E\left( X \right)}}-{\theta^2 R^2{{\rm Cov}\left( X \right)}}/{\left(2{\mathbb E\left( X \right)}^3\right)}}\right)}$.
Consequently, putting $\zeta$ into \eqref{eqn:c_e_fin} leads to the approximation of $C_e$, which coincides with the result in \cite{li2015throughput}.
\subsubsection{Properties of Effective Capacity}\label{sec:prop_ec}
By applying the Cauchy-Schwarz inequality to \eqref{eqn:u_sec_der}, we have $u^{\prime \prime}(\theta)\le 0$.
With Theorem \ref{the:dec_mon}, the effective capacity is found to be a decreasing function of QoS exponent $\theta$. Moreover, as proved in Appendix \ref{sec:app_prop}, ${C_e}$ is bounded as
\begin{equation}\label{eqn:ce_bounds_con}
\frac{R}{K} \le {C_e} \le \min\left(\frac{R}{K} - \frac{{\ln {q_K}}}{{K\theta }},\frac{R}{{\mathbb E}(X)}\right),
\end{equation}
and $\mathop {\lim }\nolimits_{\theta  \to \infty} C_e={R}/{K}$ if $q_K > 0$.

\subsubsection{Extension to the General Case}\label{sec:ext_gen}
The result of \eqref{eqn:c_e_fin} can be extended to a general case when $\{X_i\}$ is a non-negative continuous renewal process. Specifically, the horizontal axis of the distribution of $X_i$ can be partitioned into a number of equal intervals, each with length $\Delta x$. This discretization leads to a new discrete random variable $\tilde X_i$ with pmf given by $\Pr(\tilde X_i=k)=\int\nolimits_{k\Delta x}^{(k+1)\Delta x}f_X(x)dx \triangleq q_k$. Therefore, a similar approach in Subsection \ref{sec:zt} can be adopted to approximate the effective capacity, and the approximation can become more accurate by using a higher-resolution discretization. As $\Delta x \to 0$, it is readily proved that the exact expression of the effective capacity is
\begin{equation}\label{eqn:ec_cons_renw_var}
  C_e = \frac{\ln\zeta}{\theta},
\end{equation}
where $\zeta$ is the solution to the following equation
\begin{equation}\label{eqn:con_ec_ext}
\mathbb E\left( {{{\zeta }^{ X}}} \right) = {e^{\theta R}},\, \zeta \ge 0.
\end{equation}
$\zeta  \in [1,{e^{\frac{\theta R }{{\mathbb E\left( X \right)}}}}]$, and the proof can be found in Appendix \ref{app:proof_cn_ec_ext}. Furthermore, it is readily proved that the same properties as shown in Subsections \ref{sec:cal}-\ref{sec:prop_ec} also apply to the general case except for the bounds of the effective capacity in \eqref{eqn:ce_bounds_con}. Nevertheless, \eqref{eqn:ce_bounds_con} is applicable to the discrete renewal service process with non-integer interarrival time, and $K$ stands for the maximum interarrival time herein.

\section{Effective Capacity of the General Renewal Reward Process}\label{sec:ec_gen}
In general, the reward earned each time could be variable apart from constant \cite{ross2014introduction}. In communication systems, the reward of the $i$-th renewal commonly varies with the length of the renewal interval $X_i$ and the channel/termination state $\frak S_i$ of each renewal\footnote{For example, $\frak S_i$ represents the outcome whether the receiver succeeds to decode the message after the termination of HARQ. Moreover, $\frak S_i$ could also be the channel gains if perfect CSI is known at the transmitter, and adaptive \& modulation scheme is adopted.}. We denote by $(X_i,\frak S_i)$ the $i$-th renewal event state, and denote by $\mathcal R(X_i,\frak S_i)$ the reward received from the $i$-th renewal. Likewise, the effective capacity of discrete renewal service processes is derived first, and the analytical results are further extended to that of continuous ones.

By favor of the definition of the renewal counting process $N_t$, $S_t$ is given by
\begin{equation}\label{eqn:S_t_var}
S_t = \sum\limits_{i = 1}^{N_t} {{\mathcal R(X_i,\frak S_i)}}.
\end{equation}
Note that the renewal depends on the interarrival time and the termination state, we define $R_{k,s} \triangleq \mathcal R(X_i=k,\frak S_i=s)$ for notational convenience, where $k \in [1,K]$ and $s \in [1,v_k]$. Moreover, denote by $q_{k,s}$ the probability that the renewal event is ended with interval length $k$ and termination state $s$, i.e., $q_{k,s} \triangleq {\Pr \left( X_i=k,\frak S_i=s \right)}$. Thus, the probability that the interarrival time is $k$ can be obtained as
\begin{align}\label{eqn:inter_time_prob}
\Pr \left( X_i=k \right) &= {\sum\limits_{j = 1}^{{v_i}} {\Pr \left( {{X_i} = k,\frak S_i = j} \right)} }={\sum\limits_{j = 1}^{{v_i}} {q_{k,j}} }.
\end{align}

Adding up the probabilities $q_{k,s}$ of all the renewal states equals to one, i.e., 
\begin{equation}\label{eqn:totalProb_}
\sum\nolimits_{k = 1}^K {\sum\nolimits_{s = 1}^{{v_k}} {{q_{k,s}}} }  = 1.
\end{equation}
To facilitate the analysis, we define a vector of renewal counting processes that count the numbers of $R_{k,s}$'s achieved by the network service up to time $t$ as ${{\bf{n}}_t} = \left( {{{\left( {{n_{t,k,1}}, \cdots ,{n_{t,k,{v_k}}}} \right)}_{k=1}^K}} \right)$, where $n_{t,k,s}$ represents the number of the reward $R_{k,s}$'s earned up until time $t$, and the subscript $t$ is omitted in the sequel for simplicity. Moreover, since the maximum interval between two consecutive renewals is $K$, ${{\bf{n}}_t}$ should satisfy the following constraint as
\begin{equation}\label{eqn:nt_cons}
{\left[ {t - K + 1} \right]^ + } \le \sum\nolimits_{k = 1}^K {\sum\nolimits_{s = 1}^{{v_k}} {k{n_{k,s}}} }  \le t,
\end{equation}
where $[x]^+=\max\{0,x\}$ is the projection onto the nonnegative orthant. With the above definition, the total amount of reward given in \eqref{eqn:S_t_var} can be rewritten as
\begin{equation}\label{eqn:S_t_var_rew}
{S_t} = \sum\nolimits_{k = 1}^K {\sum\nolimits_{s = 1}^{{v_k}} {{n_{k,s}}{R_{k,s}}} }.
\end{equation}
Therefore, the corresponding effective capacity can be obtained as
\begin{equation}\label{eqn:effcap_def_var}
{C_e} =  - \mathop {\lim }\limits_{t \to \infty } \frac{1}{{\theta t}}\ln \underbrace {\mathbb E\left\{ {{e^{ - \theta \sum\nolimits_{k = 1}^K {\sum\nolimits_{s = 1}^{{v_k}} {{n_{k,s}}{R_{k,s}}} } }}} \right\}}_{\varphi \left( t \right)},
\end{equation}
where ${\varphi \left( t \right)}$ is given by \eqref{eqn:avg_ec_vari}, shown at the top of the next page.
\begin{figure*}[!t]
\begin{equation}\label{eqn:avg_ec_vari}
{\varphi \left( t \right)}=\sum\limits_{\sum\limits_{k = 1}^K {\sum\limits_{s = 1}^{{v_k}} {k{n_{k,s}}} }  \in \left[ {{{\left[ {t - K + 1} \right]}^ + },t} \right]} {{e^{ - \theta \sum\limits_{k = 1}^K {\sum\limits_{s = 1}^{{v_k}} {{n_{k,s}}{R_{k,s}}} } }}\Pr \left( {{\bf{n}}_t = \left( {{{\left( {{n_{k,1}}, \cdots ,{n_{k,{v_k}}}} \right)}_{k=1}^K}} \right)} \right)}.
\end{equation}
\end{figure*}
Different from Section \ref{sec:ana}, it is impossible to directly apply Z-transform to derive the moment generating function (mgf) of $S_t$, $\varphi(t)$, due to the complex vector form of ${{\bf{n}}_t}$. To address this challenge, the effective capacity with finite time is obtained by using the multinomial distribution of ${{\bf{n}}_t}$.

\subsection{Effective Capacity with Finite Time}
To proceed, $\Pr \left( {{{\bf{n}}_t} = \left( {{{\left( {{n_{k,1}}, \cdots ,{n_{k,{v_k}}}} \right)}_{k=1}^K}} \right)} \right)$ is derived first. To do so, we define $\tau = t - \sum\nolimits_{k = 1}^K {\sum\nolimits_{s = 1}^{{v_k}} {k{n_{k,s}}} } $ to show the time of the last renewal passed before time $t$. Clearly, it follows from \eqref{eqn:nt_cons} that $\tau$ is bounded as $0 \le \tau \le K$, and the interval of the last renewal should be larger than $\tau$, i.e., $X_{N_t+1} > \tau$. 
 According to the law of total probability, $\Pr \left( {{{\bf{n}}_t} = \left( {{{\left( {{n_{k,1}}, \cdots ,{n_{k,{v_k}}}} \right)}_{k=1}^K}} \right)} \right)$ is given by \eqref{eqn:n_t_jointpdf_fin}, shown at the top of the page, where step (a) holds because of the independence among renewals, and step (b) holds by using \eqref{eqn:inter_time_prob}.
\begin{figure*}[!t]
\begin{align}\label{eqn:n_t_jointpdf_fin}
\Pr \left( {{{\bf{n}}_t} = \left( {{{\left( {{n_{k,1}}, \cdots ,{n_{k,{v_k}}}} \right)}_{k=1}^K}} \right)} \right) &= \sum\limits_{i  = \tau +1}^K {\Pr \left( {\left. {{{\bf{n}}_t} = \left( {{{\left( {{n_{k,1}}, \cdots ,{n_{k,{v_k}}}} \right)}_{k=1}^K}} \right)} \right|{X_{{N_t} + 1}} = i } \right)\Pr \left( {{X_{{N_t} + 1}} = i } \right)} \notag\\
 &\mathop  = \limits^{\left( a \right)}  \Pr \left( {{{\bf{n}}_{t - \tau }} = \left( {{{\left( {{n_{k,1}}, \cdots ,{n_{k,{v_k}}}} \right)}_{k=1}^K}} \right)} \right)\sum\limits_{i  = \tau+1 }^K {\Pr \left( {X_{{N_t} + 1}  = i } \right)}\notag\\
 &\mathop  = \limits^{\left( b \right)} \Pr \left( {{{\bf{n}}_{t - \tau }} = \left( {{{\left( {{n_{k,1}}, \cdots ,{n_{k,{v_k}}}} \right)}_{k=1}^K}} \right)} \right)\sum\limits_{i = \tau+1 }^K {\sum\limits_{j = 1}^{{v_i}} {{q_{i,j}}} } ,
\end{align}
\end{figure*}
More specifically, the vector ${\bf n}_{t-\tau}$ follows a multinomial distribution with the pmf given by \eqref{eqn:n_t_cond}, shown at the top of this page, where ${{n}\choose {{n_1}, \cdots ,{n_Q}}}= {{n!}}/\left({{{n_1}! \cdots {n_Q}!}}\right)$ is the multinomial coefficient and $n=\sum\nolimits_{q = 1}^Q {{n_q}}$.
\begin{figure*}[!t]
\begin{align}\label{eqn:n_t_cond}
\Pr \left( {{{\bf{n}}_{t - \tau }} = \left( {{{\left( {{n_{k,1}}, \cdots ,{n_{k,{v_k}}}} \right)}_{k=1}^K}} \right)} \right) = \left( {\begin{array}{*{20}{c}}
{\sum\limits_{k = 1}^K {\sum\limits_{s = 1}^{{v_k}} {{n_{k,s}}} } }\\
{ {{{\left( {{n_{k,1}}, \cdots ,{n_{k,{v_k}}}} \right)}_{k=1}^K}}}
\end{array}} \right)\prod\limits_{k = 1}^K {\prod\limits_{s = 1}^{{v_k}} {{q_{k,s}}^{{n_{k,s}}}} }.
\end{align}
\hrulefill
\end{figure*}

By substituting \eqref{eqn:n_t_jointpdf_fin} and \eqref{eqn:n_t_cond} into \eqref{eqn:avg_ec_vari} together with some algebraic manipulations, $\varphi\left( t \right)$ is finally rewritten as \eqref{eqn:phi_fin} at the top of the following page.
\begin{figure*}[!t]
\begin{align}\label{eqn:phi_fin}
&\varphi\left( t \right) = \notag\\
&\sum\limits_{\sum\limits_{k = 1}^K {\sum\limits_{s = 1}^{{v_k}} {k{n_{k,s}}} }  \in \left[ {{{\left[ {t - K + 1} \right]}^ + },t} \right]} {{e^{ - \theta \sum\limits_{k = 1}^K {\sum\limits_{s = 1}^{{v_k}} {{R_{k,s}}{n_{k,s}}} } }}\left( {\begin{array}{*{20}{c}}
{\sum\limits_{k = 1}^K {\sum\limits_{s = 1}^{{v_k}} {{n_{k,s}}} } }\\
{\left( {{n_{k,1}}, \cdots ,{n_{k,{v_k}}}} \right)_{k = 1}^K}
\end{array}} \right)\prod\limits_{k = 1}^K {\prod\limits_{s = 1}^{{v_k}} {{q_{k,s}}^{{n_{k,s}}}} } \sum\limits_{i = t - \sum\limits_{k = 1}^K {\sum\limits_{s = 1}^{{v_k}} {k{n_{k,s}}} }  + 1}^K {\sum\limits_{j = 1}^{{v_i}} {{q_{i,j}}} } }.
\end{align}
\end{figure*}
Specifically, if $0 \le t<K$, \eqref{eqn:phi_fin} becomes
\begin{align}\label{eqn:varphi_t_lessK}
\varphi \left( t \right) &= \sum\limits_{\sum\limits_{k = 1}^t {\sum\limits_{s = 1}^{{v_k}} {k{n_{k,s}}} }  \in \left[ {0,t} \right]} {{e^{ - \theta \sum\limits_{k = 1}^t {\sum\limits_{s = 1}^{{v_k}} {{R_{k,s}}{n_{k,s}}} } }}} \notag\\
&\quad\times \left( {\begin{array}{*{20}{c}}
{\sum\limits_{k = 1}^t {\sum\limits_{s = 1}^{{v_k}} {{n_{k,s}}} } }\\
{\left( {{n_{k,1}}, \cdots ,{n_{k,{v_k}}}} \right)_{k = 1}^t}
\end{array}} \right)\prod\limits_{k = 1}^t {\prod\limits_{s = 1}^{{v_k}} {{q_{k,s}}^{{n_{k,s}}}} } \notag\\
&\quad \times \sum\limits_{i = t - \sum\limits_{k = 1}^t {\sum\limits_{s = 1}^{{v_k}} {k{n_{k,s}}} }  + 1}^K {\sum\limits_{j = 1}^{{v_i}} {{q_{i,j}}} },\,0 \le t< K,
\end{align}
and $\varphi(0)=1$. However, for large $t$, the computation of effective capacity with \eqref{eqn:phi_fin} causes higher complexity overhead. To address this issue, we turn to the Z-transform to derive the effective capacity in a closed-form.
\subsection{Z-Transform-Based Analysis}\label{sec:z_trans_var_ren}
By virtue of the recurrence relation of multinomial coefficient \cite[eq.26.4.10]{olver2010nist}, \eqref{eqn:phi_fin} can be rewritten as (\ref{eqn:phi_t_recur}), shown at the top of the next page.
\begin{figure*}[!t]
\begin{align}\label{eqn:phi_t_recur}
\varphi \left( t \right) &=   { {\sum\limits_{\sum\limits_{k = 1}^K {\sum\limits_{s = 1}^{{v_k}} {k{n_{k,s}}} }  \in \left[ {{{\left[ {t - K + 1} \right]}^ + },t} \right]} {{e^{ - \theta \sum\limits_{k = 1}^K {\sum\limits_{s = 1}^{{v_k}} {{R_{k,s}}{n_{k,s}}} } }}\prod\limits_{k = 1}^K {\prod\limits_{s = 1}^{{v_k}} {{q_{k,s}}^{{n_{k,s}}}} } \sum\limits_{i = t - \sum\limits_{k = 1}^K {\sum\limits_{s = 1}^{{v_k}} {k{n_{k,s}}} }  + 1}^K {\sum\limits_{j = 1}^{{v_i}} {{q_{i,j}}} } } } } \notag\\
&\quad \times \sum\limits_{\kappa = 1}^K\sum\limits_{\nu = 1}^{{v_\kappa}} \left( {\begin{array}{*{20}{c}}
{\sum\limits_{k = 1}^K {\sum\limits_{s = 1}^{{v_k}} {{n_{k,s}}} }  - 1}\\
{\left( {{n_{1,1}}, \cdots ,{n_{1,{v_1}}}} \right), \cdots ,\left( {{n_{\kappa,1}}, \cdots ,{n_{\kappa,\nu}} - 1, \cdots ,{n_{\kappa,{v_\kappa}}}} \right), \cdots ,\left( {{n_{K,1}}, \cdots ,{n_{K,{v_K}}}} \right)}
\end{array}} \right).
\end{align}
\hrulefill
\end{figure*}
By assuming $t\ge K$ and switching the order of summations along with some rearrangements, $\varphi \left( t \right)$ is represented by a homogeneous difference equation as
\begin{equation}\label{eqn:varphi_heq}
\varphi\left( t \right) - \sum\limits_{\kappa = 1}^K {{a_\kappa}\varphi\left( {t - \kappa} \right)} = 0,\,t\ge K,
\end{equation}
where $a_\kappa = {\sum\nolimits_{j = 1}^{{v_\kappa}} {{q_{\kappa,j}}{e^{ - \theta {R_{\kappa,j}}}}} }$. 
As proved in Appendix \ref{app:proof_varphi_zf}, by applying Z-transform to \eqref{eqn:varphi_heq}, $\varphi\left( t \right)$ can be expressed as
\begin{equation}\label{eqn:varphi_zt}
\varphi(t) = \sum\limits_{l  = 1}^K {  {\varphi\left( {K - l} \right)\frac{{\det {\tilde{\bf{B}}_{l}}}}{{\det \tilde{\bf{A}}}}} },\,t\ge K,
\end{equation}
where $\tilde{\bf{A}} = \left[ {{{\left( {{\tilde z_i}^{j - 1}} \right)}_{i,j}}} \right]$, ${\tilde{\bf{B}}_{l}} = \left[ {{{\left( {{\tilde z_i}^{j - 1}} \right)}_{i,1 \le j \le K - 1}},{{\left( {\sum\nolimits_{\kappa = l}^K {{a_\kappa}{{\tilde z_i}^{t + l - \kappa - 1}}} } \right)}_{i,K}}} \right]$, and $\tilde z_i$'s are $K$ distinct roots of the polynomial ${ z^K} - \sum\nolimits_{\kappa = 1}^K {{a_\kappa}{ z^{K - \kappa}}}  = 0$. Without loss of generality, we define $|\tilde z_1|=\max\{{|\tilde z_i|},i\in [1,K]\}$. Plugging \eqref{eqn:varphi_zt} into \eqref{eqn:effcap_def_var} leads to
\begin{align}\label{eqn:ce_var_limit}
&{C_e} =  - \mathop {\lim }\limits_{t \to \infty } \frac{1}{{\theta t}}\ln {{\tilde z}_1}^t\sum\limits_{l = 1}^K {\varphi \left( {K - l} \right){{\tilde z}_1}^{l - \kappa  - 1}} \notag\\
&\quad \times \frac{{\det \left[ {{{\left( {{{\tilde z}_i}^{j - 1}} \right)}_{i,1 \le j \le K - 1}},{{\left( {\sum\limits_{\kappa  = l}^K {{a_\kappa }{{\left( {\frac{{{{\tilde z}_i}}}{{{{\tilde z}_1}}}} \right)}^{t + l - \kappa  - 1}}} } \right)}_{i,K}}} \right]}}{{\det \tilde {\bf{A}}}}\notag\\
&\quad =\frac{\ln{{\tilde z}_1^{-1}}}{\theta}=\frac{\ln \tilde \zeta}{\theta},
\end{align}
where $\tilde \zeta = {\tilde z}_1^{-1}$. In analogous to Subsection \ref{sec:zt}, some discussions are carried out as follows based on \eqref{eqn:ce_var_limit}.
\subsubsection{Calculation of $\zeta$}\label{sec:cal_hat}
It is obvious from \eqref{eqn:ce_var_limit} that $\tilde \zeta$ is a real number and $\tilde \zeta \ge 1$. Similarly, we define $\tilde g(x ) = \sum\nolimits_{\kappa = 1}^K {{a_\kappa}{x^{\kappa}}} - 1$. Hence, $\tilde \zeta$ is the zero point of $\tilde g(x )$ and uniquely exists owing to its increasing monotonicity within $x \ge 0$. The monotonicity enables us to find the zero point $\tilde \zeta$ with bisection method, and $\tilde \zeta$ lies in $[1,\exp{\left({\theta{\mathbb E\left( {{{\mathcal R(X,\frak S)}}} \right)}}/{{\mathbb E\left( X \right)}}\right)}]$, where the upper bound is proved in Subsection \ref{sec:prop_ec_var}.

\subsubsection{Approximation of $C_e$}
By defining $\tilde u(\theta) = \ln \tilde \zeta$ and $\tilde h\left( { \tilde u} \right) = \sum\nolimits_{\kappa  = 1}^K {\sum\nolimits_{j = 1}^{{v_\kappa }} {{q_{\kappa ,j}}{e^{ - \theta {R_{\kappa ,j}}}}} {e^{\kappa \tilde u}}}$, we have $\tilde h\left( {\tilde u} \right)=1$. In analogous to Subsection \ref{sec:app_ce_con}, $\tilde u(\theta)$ can be approximated for small $ \theta$ as
\begin{equation}\label{eqn:tilde_u_expan}
\tilde u(\theta ) \approx \tilde u\left( 0 \right) + \tilde u'\left( 0 \right)\theta  + \frac{{\tilde u''\left( 0 \right)}}{{2}}{\theta ^2},
\end{equation}
where $\tilde u'$ and $\tilde u''$ denote the first and second derivatives of $\tilde u$ w.r.t. $\theta$, respectively. Clearly, $\tilde u(0)=0$. $\tilde u'(\theta)$ and $\tilde u''(\theta)$ are given by
\begin{equation}\label{eqn:u_tilde_fd}
\tilde u'(\theta) = \frac{{\sum\nolimits_{\kappa = 1}^K {\sum\nolimits_{j = 1}^{{v_\kappa}} {{q_{\kappa,j}}{R_{\kappa,j}}{e^{ - {R_{\kappa,j}}\theta }}{e^{\kappa \tilde u}}} } }}{{\sum\nolimits_{\kappa = 1}^K {\sum\nolimits_{j = 1}^{{v_\kappa}} {\kappa{q_{\kappa,j}}{e^{ - {R_{\kappa,j}}\theta }}{e^{\kappa \tilde u}}} } }},
\end{equation}
and \eqref{eqn:u_tilde_sec} at the top of the next page, respectively.
\begin{figure*}[!t]
\begin{equation}\label{eqn:u_tilde_sec}
\tilde u''(\theta) = \frac{{\left( \begin{array}{l}
\left( { - \sum\limits_{\kappa = 1}^K {\sum\limits_{j = 1}^{{v_\kappa}} {{q_{\kappa,j}}{R_{\kappa,j}}^2{e^{ - {R_{\kappa,j}}\theta }}{e^{\kappa \tilde u}}} }  + \tilde u'\sum\limits_{\kappa = 1}^K {\sum\limits_{j = 1}^{{v_\kappa}} {\kappa{q_{\kappa,j}}{R_{\kappa,j}}{e^{ - {R_{\kappa,j}}\theta }}{e^{\kappa \tilde u}}} } } \right)\sum\limits_{\kappa = 1}^K {\sum\limits_{j = 1}^{{v_\kappa}} {\kappa{q_{\kappa,j}}{e^{ - {R_{\kappa,j}}\theta }}{e^{\kappa \tilde u}}} } \\
 + \left( {\sum\limits_{\kappa = 1}^K {\sum\limits_{j = 1}^{{v_\kappa}} {\kappa{q_{\kappa,j}}{R_{\kappa,j}}{e^{ - {R_{\kappa,j}}\theta }}{e^{\kappa \tilde u}}} }  - \tilde u'\sum\limits_{\kappa = 1}^K {\sum\limits_{j = 1}^{{v_\kappa}} {{\kappa^2}{q_{\kappa,j}}{e^{ - {R_{\kappa,j}}\theta }}{e^{\kappa \tilde u}}} } } \right)\sum\limits_{\kappa = 1}^K {\sum\limits_{j = 1}^{{v_\kappa}} {{q_{\kappa,j}}{R_{\kappa,j}}{e^{ - {R_{\kappa,j}}\theta }}{e^{\kappa \tilde u}}} }
\end{array} \right)}}{{{{\left( {\sum\limits_{\kappa = 1}^K {\sum\limits_{j = 1}^{{v_\kappa}} {\kappa{q_{\kappa,j}}{e^{ - {R_{\kappa,j}}\theta }}{e^{\kappa \tilde u}}} } } \right)}^2}}},
\end{equation}
\hrulefill
\end{figure*}
Hence, $\tilde u'(0)$ and $\tilde u''(0)$ can be respectively obtained as
\begin{equation}\label{eqn:u_tilde_fd0}
\tilde u'(0) = \frac{{\sum\nolimits_{\kappa  = 1}^K {\sum\nolimits_{j = 1}^{{v_\kappa }} {{q_{\kappa ,j}}{R_{\kappa ,j}}} } }}{{\sum\nolimits_{\kappa  = 1}^K {\kappa\sum\nolimits_{j = 1}^{{v_\kappa }} { {q_{\kappa ,j}}} } }} = \frac{{\mathbb E\left( {{{\mathcal R(X,\frak S)}}} \right)}}{{\mathbb E\left( X \right)}},
\end{equation}
\begin{align}\label{eqn:u_tilde_sec0}
\tilde u''(0)
&=- \frac{{\mathbb E{{\left( {{{\mathcal R(X,\frak S)}}\mathbb E\left( X \right) - \mathbb E\left( {{{\mathcal R(X,\frak S)}}} \right)X} \right)}^2}}}{{{{\left( {\mathbb E\left( X \right)} \right)}^3}}}.
\end{align}
\eqref{eqn:u_tilde_fd0} implies that $\tilde u'(0)$ refers to the average reward over time. Particularly for the HARQ schemes, $\tilde u'(0)$ is the so-called the long term average throughput (LTAT)\cite{caire2001throughput}.

Substituting \eqref{eqn:u_tilde_fd0} and \eqref{eqn:u_tilde_sec0} into \eqref{eqn:tilde_u_expan} along with \eqref{eqn:ce_var_limit}, the effective capacity can be approximated as
\begin{align}\label{eqn:ce_app_var}
{C_e} &\approx \frac{{\mathbb E\left( {{\cal R}(X,\frak S)} \right)}}{{\mathbb E\left( X \right)}} \notag\\
&\quad- \frac{{\theta \mathbb E\left\{{{\left( {{\cal R}(X,\frak S)\mathbb E\left( X \right) - \mathbb E\left( {{\cal R}(X,\frak S)} \right)X} \right)}^2}\right\}}}{2{{{\left( {\mathbb E\left( X \right)} \right)}^3}}}.
\end{align}

\subsubsection{Properties of Effective Capacity}\label{sec:prop_ec_var}
It is proved in Appendix \ref{app:pf_prop_ec} that the effective capacity is a decreasing function of $\theta$. Moreover, ${C_e}$ is bounded as
\begin{equation}\label{eqn:ec_bound_fin_var}
\frac{{{R_{\hat \kappa ,\hat j}}}}{K} \le {C_e} \le \min \left\{ {\frac{{{R_{\hat \kappa ,\hat j}}}}{{\hat \kappa }} - \frac{{\ln {q_{\hat \kappa ,\hat j}}}}{{\hat \kappa \theta }},\frac{{\mathbb E\left( {{{\mathcal R(X,\frak S)}}} \right)}}{{\mathbb E\left( X \right)}}} \right\},
\end{equation}
where $(\hat \kappa,\hat j) = \mathop {\arg }\nolimits_{\left( {\kappa ,j} \right)} \min\nolimits_{q_{\kappa ,j}>0} {R_{\kappa ,j}}$. Accordingly, as $\theta$ approaches to infinity, ${C_e}$ is bounded as
\begin{equation}\label{eqn:ec_lower_bounds_lbd}
 \frac{{{R_{\hat \kappa ,\hat j}}}}{K} \le \mathop {\lim }\limits_{\theta  \to  \infty}{C_e} \le  \frac{{{R_{\hat \kappa ,\hat j}}}}{{\hat \kappa }}.
\end{equation}
\subsubsection{Extension to the Continuous Ones}
Similarly to Subsection \ref{sec:ext_gen}, the result of \eqref{eqn:ce_var_limit} can also be extended to the case that $\{(X_i,\frak S_i)\}$ is a non-negative continuous renewal process and the reward is variable. Specifically, the horizontal and vertical axis of the distribution of $(X_i,\frak S_i)$ can be partitioned into a number of equal rectangles, each with area $\Delta x \times \Delta s$. This discretization leads to two new discrete random variables $(\tilde X_i,\tilde {\frak S}_i)$ with pmf given by $\Pr(\tilde X_i=k,\tilde {\frak S}_i=l)=\int\nolimits_{(k-1)\Delta x}^{k\Delta x}\int\nolimits_{(l-1)\Delta s}^{l\Delta s}f_{X,\frak S}(x,s)dxds \triangleq q_{k,l}$. Therefore, the similar approach in Subsection \ref{sec:z_trans_var_ren} can be employed to approximate the effective capacity. As $\Delta x , \Delta s \to 0$, we get the exact expression of the effective capacity that is the same as \eqref{eqn:ec_cons_renw_var}, where $\zeta$ is the solution to the following equation
\begin{equation}\label{eqn:var_ren_conti}
\mathbb E_{X,\frak S}\left( {{e^{ - \theta \mathcal R\left( {X,\frak S} \right)}}{\zeta ^X}} \right) = 1,\,\zeta \ge 1.
\end{equation}
Similarly, the effective capacity of the continuous case also follows same properties as shown in Subsections \ref{sec:cal_hat}-\ref{sec:prop_ec_var} except for the bounds of the effective capacity in \eqref{eqn:ec_bound_fin_var} and \eqref{eqn:ec_lower_bounds_lbd}. Nevertheless, \eqref{eqn:ec_bound_fin_var} and \eqref{eqn:ec_lower_bounds_lbd} are applicable to the discrete renewal service process with non-integer interarrival time, where $K$ is the maximum interarrival time.

\section{Applications to HARQ Systems}\label{sec:appli}

\subsection{Fixed-Rate HARQ Systems}
Since HARQ transmissions can be modelled by the renewal-reward process \cite{caire2001throughput}, the proceeding analysis of the effective capacity can be used to characterize the cross-layer throughput of HARQ\cite{li2015throughput}. To this end, a renewal of HARQ transmissions is defined as an event that the receiver successfully receives the message or the maximum number of transmissions is reached. It is assumed that the number of transmissions for each message is allowed up to ${\mathcal K}$ due to congestion avoidance. In addition, $N_t$ is the number of renewals that occur up until time $t$, and $X_k$ is the random time between two consecutive renewals.

We assume that each delivered packet contains $b$ information bits, and each one is first encoded into a long codeword. Following the FR-HARQ scheme, the generated codeword is then partitioned into $\mathcal K$ subcodewords, and each subcodeword consists of $L$ symbols. Suppose that each symbol duration is normalized to unit. More specifically, $X_i$ is the time interval demanded in the successful delivery of the $i$-th message, i.e., $X_k = k L$ and $k \le {\mathcal K}$. The pmf of $X_k$, i.e., $q_k$ is given by \cite{caire2001throughput}
\begin{align}\label{eqn:q_k_harq}
\Pr(X = k L) &\triangleq {q_k} \notag\\
&= {p_{k-1}} - {p_{k}}{\mathds 1}(k\ne {\mathcal K}),\,k \in [1,{\mathcal K}],
\end{align}
where ${p_{k}}$ is the outage probability after $k$ HARQ rounds and the indicator function ${\mathds 1}(\mathpzc A)$ is one whenever $\mathpzc A$ is true and zero otherwise. By applying capacity-achieving codes, the outage expressions for three different HARQ schemes are given respectively by \cite{caire2001throughput}
\begin{multline}
\label{eqn:outage_definition}
{p_k} = \\
\left\{ {\begin{array}{*{20}{l}}
{\Pr \left( {{{\log }_2}\left( {1 + \max \left( {{\gamma _1}, \cdots ,{\gamma _k}} \right)} \right) \le \mathpzc{R}} \right),}&{{\rm{Type~I}}}\\
{\Pr \left( {{{\log }_2}\left( {1 + \sum\limits_{l = 1}^k {{\gamma _l}} } \right) \le \mathpzc{R}} \right),}&{{\rm{CC}}}\\
{\Pr \left( {\sum\limits_{l = 1}^k {{{\log }_2}\left( {1 + {\gamma _l}} \right)}  \le \mathpzc{R}} \right),}&{{\rm{IR}}.}
\end{array}} \right.
\end{multline}
where $\mathpzc{R}=b/L$ is the transmission rate (normalized reward per renewal), $\gamma_l = \gamma_T \alpha_l$ represents the signal-to-noise ratio (SNR) of the $l$-th transmission, $\gamma_T$ and $\alpha_l$ correspond to the transmit SNR and the $l$-th channel gain, respectively and $p_0=1$ by convention. The outage performance of HARQ schemes under various fading channels has been extensively investigated. In particular, the outage probabilities of Type I HARQ and HARQ-CC have been derived in closed-form by considering general fading channels in \cite[eqs. (8), (11)]{shi2016optimal}. Moreover, the outage expression of HARQ-IR has been given in \cite[eq. (17)]{shi2017asymptotic}. Nonetheless, the specific expressions are omitted here due to space limitation.
\subsubsection{Maximum Arrival Rate}\label{sec:max_arr_rate}
In \cite{li2015throughput}, the effective capacity is used to assess the maximum arrival rate of HARQ schemes. To do so, by putting \eqref{eqn:outage_definition} into \eqref{eqn:q_k_harq} and then combining with \eqref{eqn:c_e_fin} and \eqref{eqn:z_1_root_eq}, the effective capacities of the three HARQ schemes can be calculated as ${C_e} = {{\ln \zeta }}/{\theta }$, where $\sum\nolimits_{k = 1}^{\mathcal K} {{q_k}{\zeta ^{kL}}}  = {e^{\theta b}}$. By introducing $\bar \zeta = {\zeta ^L} $ and $\bar \theta = L\theta$, the effective capacity can be obtained as
\begin{equation}\label{eqn:eff_con_rate_harq}
  {C_e} 
  = \frac{{\ln \bar \zeta }}{{\bar \theta }},
\end{equation}
where $\sum\nolimits_{k = 1}^{\mathcal K} {{q_k}{{\bar \zeta }^k}}  = {e^{\bar \theta \mathpzc{R}}}$. From \eqref{eqn:eff_con_rate_harq}, the effective capacity of the HARQ scheme is equivalent to that of a new renewal service process with the interarrival time ranged from 1 to ${\mathcal K}$ and the constant reward $\mathpzc{R}$. Moreover, $C_e$ can be approximated for small $\theta$ by using the Taylor series expansion as illustrated in Subsection \ref{sec:app_ce_con}. With \eqref{eqn:ce_bounds_con}, the effective capacity is bounded as $R/K\le C_e \le {{\mathpzc{R}}}/{{\sum\nolimits_{k = 0}^{{\mathcal K} - 1} {{p_k}} }}$ and $\mathop {\lim }\nolimits_{\theta  \to \infty} C_e={\mathpzc R}/{K}$.

Since both the decoding successes and failures are counted as rewards, the effective capacity obtained in \eqref{eqn:eff_con_rate_harq} indicates the maximum service rate supported by HARQ systems. In addition, the maximum arrival rate can be used to evaluate the throughput of the lossless HARQ scheme, which is allowed to have an infinite number of transmissions, i.e., $\mathcal K = \infty$, to guarantee no decoding failures.


\subsubsection{Outage Effective Capacity}
As pointed out in Subsection \ref{sec:max_arr_rate}, the maximum arrival rate does not imply that all the incoming data will be successfully delivered to the receiver because of the presence of decoding failures. In particular, the outage events usually take place for the truncated HARQ schemes, i.e. ${\mathcal K}<\infty$. To address this issue, the outage effective capacity is proposed in \cite{qiao2016outage,larsson2016effective}. Unlike the maximum arrival rate studied in Subsection \ref{sec:max_arr_rate}, the reward of each renewal is a two-value function for the truncated HARQ schemes. More specifically, the reward is $b$ if the message is successfully decoded and zero otherwise. Therefore, the analytical results in Section \ref{sec:ec_gen} can be used herein to evaluate the outage effective capacity. For truncated HARQ schemes, $X_i$ and $\frak S_i$ represent the number of HARQ transmissions involved into the delivery of the $i$-th message, and the termination state whether the receiver successfully recovers the message or not, respectively. Hence, the joint pmf $q_{k,s}$ of truncated HARQ schemes is given by
\begin{align}\label{eqn:q_trun_harq}
&\Pr(X = k L,\frak S=s) ={q_{k,s}}\notag\\
&\quad \quad \quad = \left\{ {\begin{array}{*{20}{c}}
{{p_{k - 1}} - {p_k},}&{k \le {\mathcal K}\& s = \rm S}\\
{{p_{\mathcal K}},}&{k = {\mathcal K}\& s = \rm F}
\end{array}} \right.,
\end{align}
where the notations $\mathrm S$ and $\mathrm F$ denote the success and the failure of the decoding, respectively. The reward function $\mathcal R(X,\frak S)$ is explicitly given by
\begin{align}\label{eqn:rew_trun_harq}
&\mathcal R(X = k L,\frak S=s) = \left\{ {\begin{array}{*{20}{c}}
{b,}&{k \le {\mathcal K}\& s = \rm S}\\
{0,}&{k = {\mathcal K}\& s = \rm F}
\end{array}} \right..
\end{align}
By using \eqref{eqn:ce_var_limit}, the outage effective capacity can be obtained as ${C_{e}^{out}} = {{\ln {\zeta} }}/{\theta }$, where $\zeta$ satisfies $\sum\nolimits_{k = 1}^{{\mathcal K}} {{q_{k,\rm S}}{e^{b\theta }}{\zeta ^{kL}}}  + {q_{{\mathcal K},\rm F}}{\zeta ^{{\mathcal K}L}} = 1$. Similarly, with the same definitions of $\bar \zeta = {\zeta ^L} $ and $\bar \theta = L\theta$, the outage effective capacity can be rewritten as ${C_{e}^{out}} = {{\ln {\bar \zeta} }}/{\bar \theta }$, where $\bar \zeta$ satisfies 
\begin{equation}\label{eq:g_tilde_trunharq}
 \sum\limits_{k = 1}^{{\mathcal K}} {{q_{k,\rm S}}{e^{\mathpzc{R}\bar \theta }}{\bar\zeta ^k}}  + {q_{{\mathcal K},\rm F}}{\bar \zeta ^{\mathcal K}} = 1,\, \bar \zeta \ge 0.
\end{equation}
Accordingly, \eqref{eq:g_tilde_trunharq} implies that ${C_{e}^{out}}$ is equivalent to the outage effective capacity of a simplified HARQ service process with the interarrival time ranged from 1 to ${\mathcal K}$, the QoS exponent $\bar \theta$ and the constant reward $\mathpzc{R}$. This result is consistent with \cite{larsson2016effective}. By using \eqref{eqn:ce_app_var}, the outage effective capacity can be approximated for small $\theta$. Moreover, from \eqref{eqn:ec_bound_fin_var}, since $({\mathcal K}L,{\rm F}) = \mathop {\arg }\nolimits_{\left( {\kappa ,j} \right)} \min\nolimits_{q_{\kappa ,j}>0} {\mathcal R(X,\frak S)}$, the outage effective capacity is bounded as
\begin{equation}\label{eqn:out_eff_cap_bounds}
0 \le {{C_{e}^{out}}} \le \min \left\{ { - \frac{{\ln {q_{{\mathcal K},\rm F}}}}{{{\mathcal K} \bar \theta }},\frac{{\mathbb E\left( {{\cal R}(X,\frak S)} \right)}}{{\mathbb E\left( X \right)}}} \right\},
\end{equation}
where the LTAT ${{\mathbb E\left( {{\cal R}(X,\frak S)} \right)}}/{{\mathbb E\left( X \right)}}$ is given by\cite{chelli2014performance}
\begin{equation}\label{eqn:ltat}
\frac{{\mathbb E\left( {{\cal R}(X,\frak S)} \right)}}{{\mathbb E\left( X \right)}} = \frac{{\mathpzc{R}\left( {1 - {p_{\mathcal K}}} \right)}}{{\sum\nolimits_{k = 0}^{{\mathcal K} - 1} {{p_k}} }}.
\end{equation}
As expected, the outage effective capacity is less than or equal to the LTAT due to the constraint of the limited buffer length. In addition, \eqref{eqn:ec_lower_bounds_lbd} indicates that
\begin{equation}\label{eqn:ec_}
\mathop {\lim }\limits_{\theta  \to  \infty}{{C_{e}^{out}}} = 0.
\end{equation}
This is different from the maximal arrival rate because erroneously received messages are counted as null rewards. As $\theta$ increases, the buffer overflow probability becomes much more stringent, which consequently leads to the continuous declination of the outage effective capacity.
\subsection{Variable-Rate HARQ-IR Systems}
Unlike the conventional FR-HARQ schemes, the transmission rate of the HARQ-IR scheme could be changeable from one transmission to another. More specifically, we assume that the length of the $k$-th subcodeword is $L_k$. By applying capacity achieving channel coding, an outage event happens when the accumulated mutual information is below $b$. The outage probability after $k$ HARQ rounds is thus written as \cite{szczecinski2013rate}
\begin{align}\label{eqn:out_prob_def}
{\hat p_{k}}
&={\rm{Pr}}\left( {\sum\limits_{l = 1}^k {\frac{1}{\mathpzc{R}_l}{{\log }_2}\left( {1 + \gamma_l} \right)}  < 1}
 \right),
\end{align}
where $\mathpzc{R}_k = b/L_k$ for $k\in [1,{\mathcal K}]$. From \eqref{eqn:out_prob_def}, the outage probability of VR-HARQ scheme can be derived in closed-form if the fading channels are independently Nakagami-m distributed among different HARQ rounds. The channel gains $\alpha_1,\cdots,\alpha_{\mathcal K}$ are independent Gamma random variables under the circumstance, and we assume $\alpha_l \sim \mathcal G(m_l,\Omega_l/m_l)$, where $m_l$ and $\Omega_l$ stand for the fading order and average channel power gain, respectively. As proved in Appendix \ref{app:pk_hat}, by using Mellin transform, ${\hat p_{k}}$ can be derived in terms of the generalized Fox's H function as \eqref{eqn:pk_F_G_foxh}, shown at the top of the next page, where the explicit definition of the generalized Fox's H function $Y_{p,q}^{m,n}(\cdot)$ is omitted here to conserve space (see \cite{yilmaz2010outage}).
\begin{figure*}[!t]
\begin{equation}\label{eqn:pk_F_G_foxh}
{\hat p_{k}} = Y_{1,k + 1}^{k,1}\left[ {\left. {\begin{array}{*{20}{c}}
{\left( {1,1,0,1} \right)}\\
{\left( {1,\frac{1}{{{\mathpzc R_1}}},\frac{{{m_1}}}{{{\gamma _T}{\Omega _1}}},{m_1}} \right), \cdots ,\left( {1,\frac{1}{{{\mathpzc R_k}}},\frac{{{m_k}}}{{{\gamma _T}{\Omega _k}}},{m_k}} \right),\left( {0,1,0,1} \right)}
\end{array}} \right|2\prod\limits_{l = 1}^k {{{\left( {\frac{{{m_l}}}{{{\gamma _T}{\Omega _l}}}} \right)}^{\frac{1}{{{\mathpzc R_l}}}}}} } \right],
\end{equation}
\hrulefill
\end{figure*}

The joint pmf of the interarrival time $\hat X$ and the termination state $\hat{\frak S}$, ${{\hat q}_{k,s}}$, is given by
\begin{align}\label{eqn:q_hat_jointpmf}
&\Pr \left( {\hat X = \sum\nolimits_{l = 1}^k {{L_l}} ,\hat {\frak S} = s} \right) = {{\hat q}_{k,s}}\notag\\
& \quad \quad \quad = \left\{ {\begin{array}{*{20}{c}}
{{{\hat p}_{k - 1}} - {{\hat p}_k},}&{k \le {\mathcal K}\& s = {\rm{S}}}\\
{{{\hat p}_{\mathcal K}},}&{k = {\mathcal K}\& s = {\rm{F}}}
\end{array}} \right..
\end{align}
Moreover, the corresponding reward function $\mathcal R(\hat X,\hat{\frak S})$ is expressed as
\begin{align}\label{eqn:rew_trun_harq}
\mathcal R(\hat X =  \sum\nolimits_{l = 1}^k {{L_l}} ,\hat{\frak S}=s) 
& = \left\{ {\begin{array}{*{20}{c}}
{b,}&{k \le {\mathcal K}\& s = \rm S}\\
{0,}&{k = {\mathcal K}\& s = \rm F}
\end{array}} \right..
\end{align}

Therefore, the effective capacity is obtained as $C_e^{out} = {{\ln  \zeta }}/{{ \theta }} $, where $ \zeta$ is given by $\sum\nolimits_{k = 1}^{\mathcal K} {{{\hat q}_{k,{\rm{S}}}}{e^{b\theta }}{\zeta ^{\sum\nolimits_{l = 1}^k {{L_l}} }}}  + {{\hat q}_{{\mathcal K},{\rm{F}}}}{\zeta ^{\sum\nolimits_{l = 1}^{\mathcal K} {{L_l}} }} = 1$. By defining $\hat \theta  = {b \theta }$ and $\hat \zeta = {{{ \zeta }^b}}$, the effective capacity can be rewritten as $C_e = {{\ln \hat \zeta }}/{{\hat \theta }}$, where $\hat \zeta$ satisfies
\begin{equation}\label{eqn:zeta_tilde_ob}
\sum\limits_{k = 1}^{\mathcal K} {{{\hat q}_{k,{\rm{S}}}}{e^{\hat \theta }}{{\hat \zeta }^{\sum\limits_{l = 1}^k {{{{\mathpzc R_l}}^{-1}}} }}}  + {{\hat q}_{{\mathcal K},{\rm{F}}}}{{\hat \zeta }^{\sum\limits_{l = 1}^{\mathcal K} {{{{\mathpzc R_l}}^{-1}}} }} = 1.
\end{equation}
With \eqref{eqn:ce_app_var}, the effective capacity can be approximated.  In analogous to \eqref{eqn:out_eff_cap_bounds}, the effective capacity is bounded as
\begin{equation}\label{eqn:varout_eff_cap_bounds}
0 \le {{C_{e}}} \le \min \left\{ { - \frac{{\ln {\hat q_{{\mathcal K},\rm F}}}}{{\hat \theta\sum\nolimits_{l=1}^{\mathcal K} {\mathpzc R_l}^{-1}  }},\frac{{\mathbb E\left( {{\cal R}(\hat X,\hat{\frak S})} \right)}}{{\mathbb E\left( \hat X \right)}}} \right\},
\end{equation}
where the LTAT ${{\mathbb E\left( {{\cal R}(\hat X,\hat {\frak S})} \right)}}/{{\mathbb E\left( \hat X \right)}}$ is given by \cite{szczecinski2013rate}
\begin{equation}\label{eqn:var_ltat}
\frac{{\mathbb E\left( {{\cal R}(\hat {X},\hat{\frak S})} \right)}}{{\mathbb E\left( \hat X \right)}} = \frac{{{1 - {\hat p_{\mathcal K}}} }}{{\sum\limits_{k = 0}^{{\mathcal K} - 1} {{\mathpzc R_{k+1}}^{-1}\hat p_k} }}.
\end{equation}
Furthermore, the result $\mathop {\lim }\nolimits_{\theta  \to  \infty}{{C_{e}^{out}}} = 0$ similar to \eqref{eqn:ec_} can be obtained.

\subsection{Cross-Packet HARQ Systems}
In the retransmissions of both the FR- and VR-HARQ schemes, the retransmitted subcodewords do not involve new information bits. Particularly for FR-HARQ, the provisioning of the throughput close to ergodic capacity is prevented from exploiting the possible redundant mutual information. Instead, the XP-HARQ scheme was devised to avoid the waste of mutual information to substantially improve the throughput \cite{jabi2017adaptive}. More specifically, the XP-HARQ scheme introduces new information bits in retransmissions besides the redundancy bits. We assume the number of the new introduced information bits in the $k$-th HARQ round is $b_k$, and the number of the information bits in the initial transmission is $b_1$. During the $k$-th HARQ round, the currently introduced information bits is first concatenated with all the previously introduced ones to construct a long message. The resultant message thus contains $\sum\nolimits_{l=1}^k b_l$ information bits, and is then encoded into the $k$-th codeword of length $L$ symbols. According to the decoding conditions of XP-HARQ studied in \cite{jabi2017adaptive}, the outage event takes place after $k$ HARQ rounds if and only if the accumulated mutual information is below the number of the delivered information bits in the current and previous HARQ rounds. Therefore, the outage probability of the XP-HARQ scheme can be obtained as\cite{jabi2017adaptive}
\begin{align}\label{eqn:p_tilde_xpharq}
{{\breve p}_k} &= {\rm{Pr}}\left\{ {\bigcup\limits_{\kappa  = 1}^k {\left( {\sum\limits_{l = 1}^\kappa  {L{{\log }_2}\left( {1 + {\gamma _l}} \right)}  < \sum\limits_{l = 1}^\kappa  {{b_l}} } \right)} } \right\}\notag\\
 & = {\rm{Pr}}\left\{ {\bigcup\limits_{\kappa  = 1}^k {\left( {\underbrace {\sum\limits_{l = 1}^\kappa  {{{\log }_2}\left( {1 + {\gamma _l}} \right)} }_{{\mathcal I_\kappa }} < \sum\limits_{l = 1}^\kappa  {{{\breve {\mathpzc R}}_l}} } \right)} } \right\},
\end{align}
where ${{\breve {\mathpzc R}}_l} = b_l/L$ and $\bigcup(\cdot)$ represents the union of events. Unfortunately, due to the presence of the correlation among the accumulated mutual informations per symbol ${\mathcal I_1 },\cdots,{\mathcal I_{\mathcal K} }$, it is intractable to derive an exact expression for \eqref{eqn:p_tilde_xpharq} and there is still no
readily available result in the literature. In this paper, ${{\breve p}_k}$ is computed by conducting Monte Carlo simulations.

In order to obtain the effective capacity of the XP-HARQ scheme, the pmf of the interarrival time $\breve X$ and the termination state $\breve{\frak S}$, ${{\breve q}_{k,s}}$, is derived as
\begin{align}\label{eqn:q_tilde_jointpmf}
&\Pr \left( {\breve X = kL ,\breve {\frak S} = s} \right) = {{\breve q}_{k,s}}\notag\\
& \quad \quad \quad = \left\{ {\begin{array}{*{20}{c}}
{{{\breve p}_{k - 1}} - {{\breve p}_k},}&{k \le {\mathcal K}\& s = {\rm{S}}}\\
{ {{\breve p}_{\mathcal K}},}&{k = {\mathcal K}\& s = {\rm{F}}}
\end{array}} \right..
\end{align}
And the reward function $\mathcal R(\breve X,\breve{\frak S})$ is given by
\begin{align}\label{eqn:rew_xp_harq}
\mathcal R(\breve X =  kL ,\breve{\frak S}=s) 
& = \left\{ {\begin{array}{*{20}{c}}
{\sum\nolimits_{l=1}^k b_l,}&{k \le {\mathcal K}\& s = \rm S}\\
{0,}&{k = {\mathcal K}\& s = \rm F}
\end{array}} \right..
\end{align}

Accordingly, the effective capacity of XP-HARQ is calculated as $C_e^{out} = {{\ln  \zeta }}/{{ \theta }} $, where $ \zeta$ is determined by $\sum\nolimits_{k = 1}^{\mathcal K} {{{\breve q}_{k,{\rm{S}}}}{e^{\theta \sum\nolimits_{l = 1}^k {{b_l}} }}{\zeta ^{kL}}}  + {{\breve q}_{{\mathcal K},{\rm{F}}}}{\zeta ^{kL}} = 1$. By defining $\breve \theta  = {L \theta }$ and $\breve \zeta = {{{ \zeta }^L}}$, we have $C_e = {{\ln \breve \zeta }}/{{ \breve \theta }} $, where $\breve \theta$ is given by
\begin{equation}\label{eqn:eff_xp_harq_tilde_zeta}
\sum\limits_{k = 1}^{\mathcal K} {{{\breve q}_{k,{\rm{S}}}}{e^{\breve \theta \sum\limits_{l = 1}^k {{{\breve {\mathpzc R}}_l}} }}{{\breve \zeta }^k}}  + {{\breve q}_{{\mathcal K},{\rm{F}}}}{{\breve \zeta }^k} =  1.
\end{equation}
With \eqref{eqn:ce_app_var}, the effective capacity can be approximated.  In analogous to \eqref{eqn:out_eff_cap_bounds}, the effective capacity is bounded as
\begin{equation}\label{eqn:varout_eff_cap_boundstilde}
0 \le {{C_{e}}} \le \min \left\{ { - \frac{{\ln {\breve q_{{\mathcal K},\rm F}}}}{{{\mathcal K}\breve \theta  }},\frac{{\mathbb E\left( {{\cal R}(\breve X,\breve{\frak S})} \right)}}{{\mathbb E\left( \breve X \right)}}} \right\},
\end{equation}
where the LTAT of XP-HARQ, ${{\mathbb E\left( {{\cal R}(\breve X,\breve {\frak S})} \right)}}/{{\mathbb E\left( \breve X \right)}}$, is given by \cite{jabi2017adaptive}
\begin{equation}\label{eqn:var_ltat_tilde}
\frac{{\mathbb E\left( {{\cal R}(\breve {X},\breve{\frak S})} \right)}}{{\mathbb E\left( \breve X \right)}} = \frac{{\sum\nolimits_{k = 1}^{\mathcal K} {\breve {\mathpzc R_k}\left( {{\breve p_{k - 1}} - {\breve p_{\mathcal K}}} \right)} }}{{\sum\nolimits_{k = 0}^{{\mathcal K} - 1} {{\breve p_k}} }}.
\end{equation}
Furthermore, the similar result to \eqref{eqn:ec_} can be demonstrated, i.e., $\mathop {\lim }\nolimits_{\theta  \to  \infty}{{C_{e}}} = 0$.

Finally, the calculations of the effective capacity for various HARQ schemes are briefly summarized in Table \ref{tab:eff_cap}.
\begin{figure*}[!t]
\begin{center}
\captionsetup{type=table} 
\caption{The effective capacity for various HARQ schemes.}
\label{tab:eff_cap}%
\begin{tabular}{|p{3.2cm}||p{1.8cm}|p{1.1cm}|p{5.6cm}|p{2.3cm}|}
 \hline
 \diagbox[width=3.6cm]{\textbf{HARQ}}{\textbf{Metrics}}& $C_e$ & $\theta$ & $\zeta$ & $\frac{{\mathbb E\left( {{\cal R}(X,\frak S)} \right)}}{{\mathbb E\left( X \right)}}$\\
 \hline
 \hline
 FR-HARQ-Maximum Arrival Rate  & \multirow{4}*{${C_e}  = {{\ln \bar \zeta }}/{{\bar \theta }}$}    & \multirow{4}*{$ \bar \theta = L \theta$}&   $\sum\limits_{k = 1}^{\mathcal K} {{q_k}{{\bar \zeta }^k}}  = {e^{\bar \theta \mathpzc{R}}}$ & $\frac{{\mathpzc{R}}}{{\sum\limits_{k = 0}^{{\mathcal K} - 1} {{p_k}} }}$\\
 \cline{1-1}\cline{4-5}
 FR-HARQ-Outage Effective Capacity &   ~  &   ~ &$\sum\limits_{k = 1}^{{\mathcal K}} {{q_{k,\rm S}}{e^{\mathpzc{R}\bar \theta }}{\bar\zeta ^k}}  + {q_{{\mathcal K},\rm F}}{\bar \zeta ^{\mathcal K}} = 1$&$\frac{{\mathpzc{R}\left( {1 - {p_{\mathcal K}}} \right)}}{{\sum\limits_{k = 0}^{{\mathcal K} - 1} {{p_k}} }}$\\
 \hline
 VR-HARQ &$C_e = {{\ln \hat \zeta }}/{{\hat \theta }}$& $\hat \theta = b\theta$ &  $\sum\limits_{k = 1}^{\mathcal K} {{{\hat q}_{k,{\rm{S}}}}{e^{\hat \theta }}{{\hat \zeta }^{\sum\limits_{l = 1}^k {{{{\mathpzc R_l}}^{-1}}} }}}  + {{\hat q}_{{\mathcal K},{\rm{F}}}}{{\hat \zeta }^{\sum\limits_{l = 1}^{\mathcal K} {{{{\mathpzc R_l}}^{-1}}} }} = 1$ & $\frac{{{1 - {\hat p_{\mathcal K}}} }}{{\sum\limits_{k = 0}^{{\mathcal K} - 1} {{\mathpzc R_{k+1}}^{-1}\hat p_k} }}$\\
 \hline
 XP-HARQ &$C_e = {{\ln \breve \zeta }}/{{ \breve \theta }}$ & $\breve \theta  = {L \theta }$ &  $\sum\limits_{k = 1}^{\mathcal K} {{{\breve q}_{k,{\rm{S}}}}{e^{\breve \theta \sum\limits_{l = 1}^k {{{\breve {\mathpzc R}}_l}} }}{{\breve \zeta }^k}}  + {{\breve q}_{{\mathcal K},{\rm{F}}}}{{\breve \zeta }^k} =  1$&$\frac{{\sum\limits_{k = 1}^{\mathcal K} {\breve {\mathpzc R_k}\left( {{\breve p_{k - 1}} - {\breve p_{\mathcal K}}} \right)} }}{{\sum\limits_{k = 0}^{{\mathcal K} - 1} {{\breve p_k}} }}$\\
 \hline
\end{tabular}
\end{center}
\end{figure*}

\section{Verifications and Discussions}\label{sec:ver}
In this section, numerical examples are presented for verifications and discussions. Unless otherwise specified, we set $\gamma_T=20$dB, $\mathpzc R=4$bps/Hz, ${\mathcal K}=5$ and $b=1080$bits \cite{liu2004cross}. In addition, we assume that all HARQ links experience independent Rayleigh fading with unit average power, i.e., ${\mathbb E}(\alpha_l)=\Omega_l=1$ and $m_l=1$ for $l \in [1,\mathcal K]$. 
\subsection{FR-HARQ Scheme}
 To start with, the maximum arrival rate against the QoS exponent $\theta$ for the three conventional FR-HARQ schemes is depicted in Fig. \ref{fig:theta}, and the approximate results obtained in \cite{li2015throughput} are provided for comparison. It is observed that $C_e$ decreases with $\theta$, which is consistent with our analysis. This is because that higher $\theta$ represents stricter queuing constraints imposed on buffer overflow probability, which limits the increase of service arrival rate, and consequently results in the decrease of the capacity. As shown in Fig. \ref{fig:theta}, the gap between the approximate and the exact results grows with $\theta$, which justifies the significance of the exact analysis, especially, for large $\theta$. As $\theta$ approaches to infinity, the maximum arrival rate tends to a lower bound that is given by \eqref{eqn:ce_bounds_con}, i.e., $\mathop {\lim }\nolimits_{\theta  \to \infty}  C_e = R/K = 0.8$bps/Hz.  Moreover, Fig. \ref{fig:theta} further substantiates that HARQ-IR is superior to other HARQ schemes in terms of the maximum arrival rate.

 In Fig. \ref{fig:outage_theta}, the outage effective capacity $C_e$ is plotted versus the QoS exponent. It can be seen that the tendency of $C_e$ for the three conventional HARQ schemes is the same as Fig. \ref{fig:theta}, and their approximations given by \eqref{eqn:ce_app_var} are also shown for comparison. It is readily found that the approximate results coincide with the exact ones under small QoS exponent. Unlike the maximum arrival rate in Fig. \ref{fig:theta}, the outage effective capacity approaches to zero as $\theta$ tends to infinity. This difference is due to the fact that the amount of unsuccessfully delivered data is not counted as reward while computing the outage effective capacity. The smaller $\theta$ means the stricter QoS constraint, which needs smaller arrival rate $\mu$ to ensure a tighter constraint on buffer overflow probability. 

\begin{figure}
  \centering
  \includegraphics[width=3.2in,height=1.6in]{./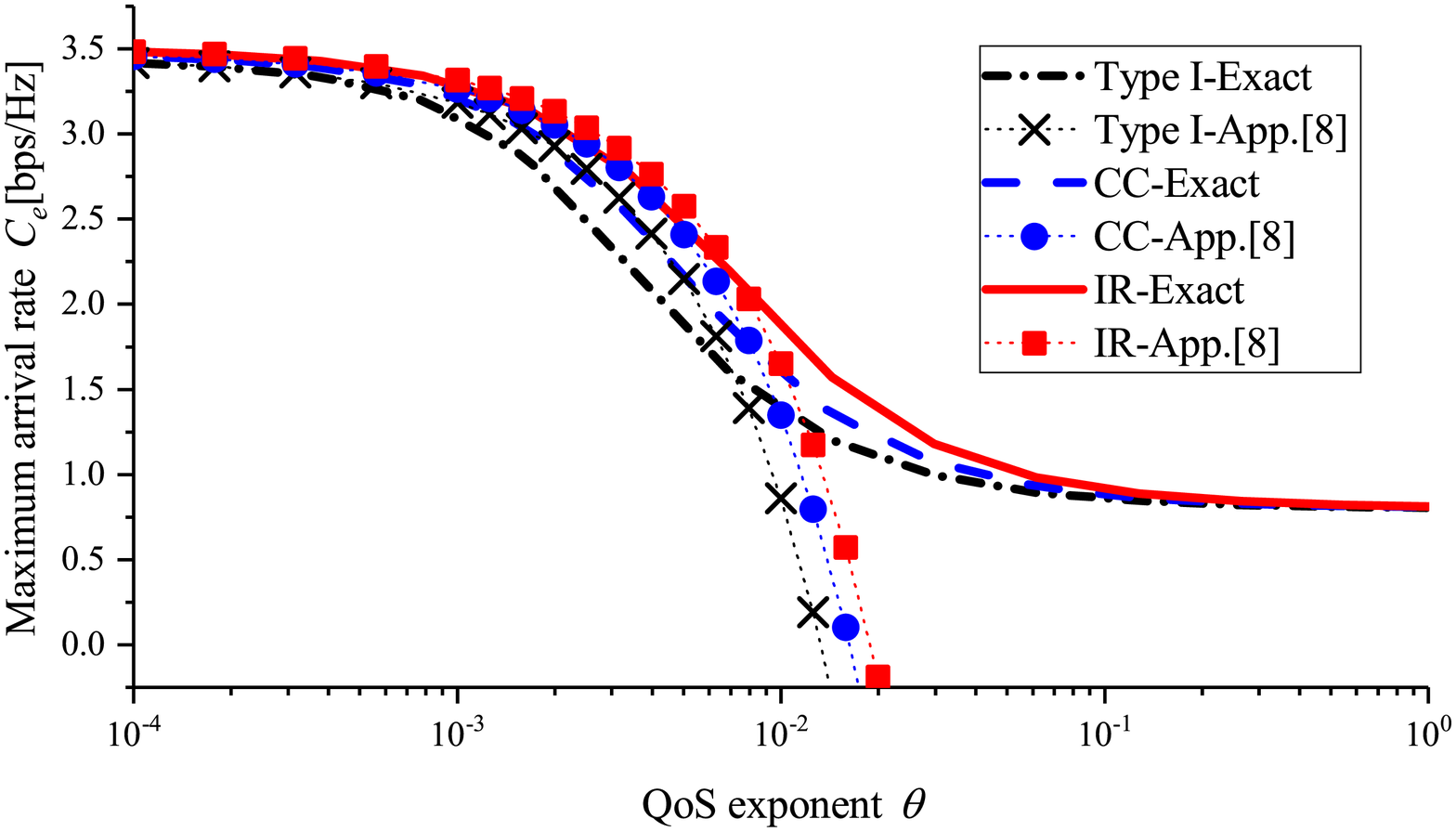}
  \caption{The maximum arrival rate versus the QoS exponent for FR-HARQ schemes.}\label{fig:theta}
\end{figure}

\begin{figure}
  \centering
  \includegraphics[width=3.2in,height=1.6in]{./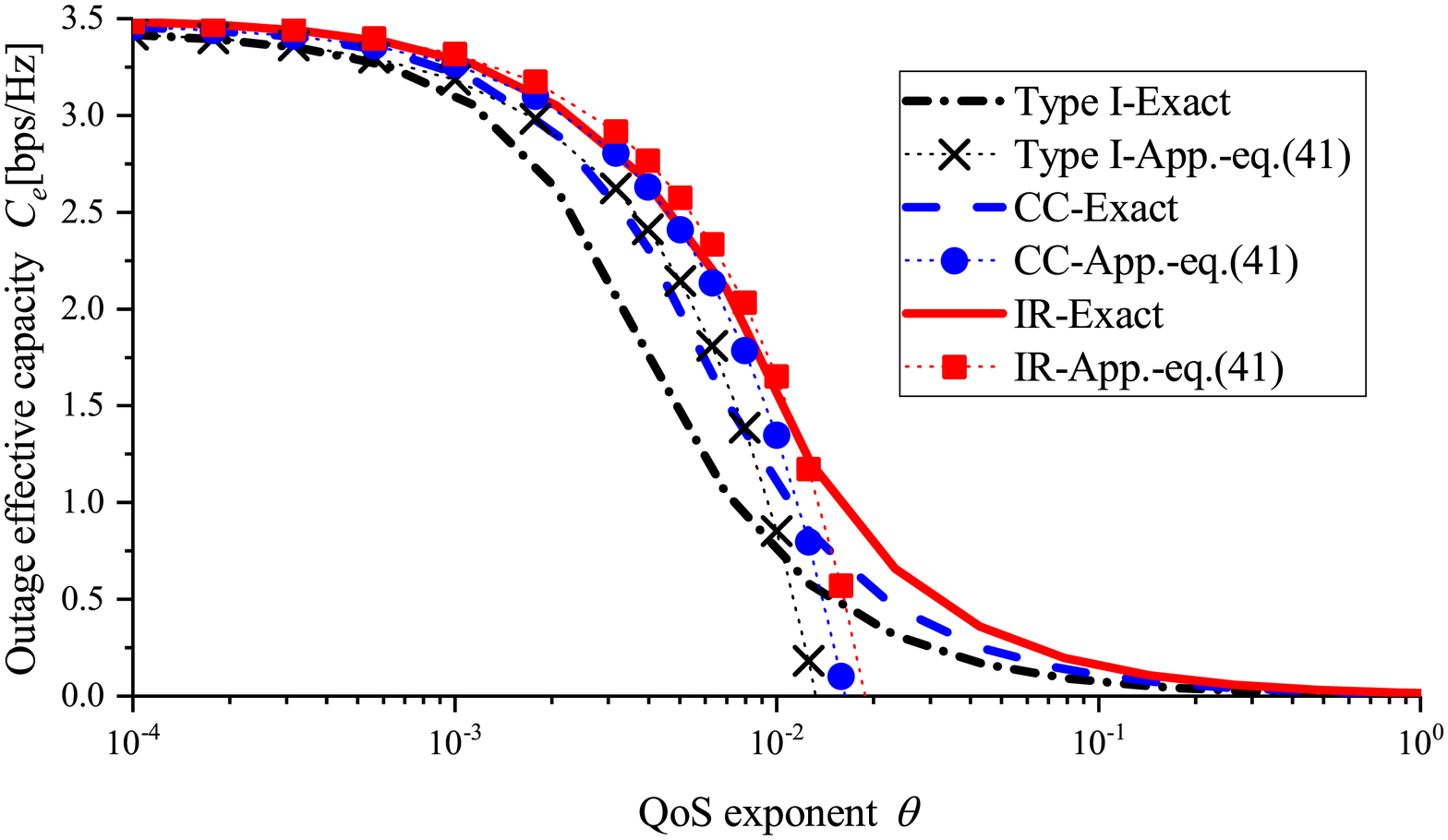}
  \caption{The outage effective capacity versus the QoS exponent for FR-HARQ schemes.}\label{fig:outage_theta}
\end{figure}

Fig. \ref{fig:outage_snr} illustrates the impact of the transmit SNR $\gamma_T$ on the effective capacity, and also shows the comparison between the maximum arrival rate and the outage effective capacity. It is not out of expectation that both the maximum arrival rate and the effective capacity increase with $\gamma_T$. Nonetheless, both of them are found to be bounded no matter how high $\gamma_T$ is, because the bounds of the maximum arrival rate given in \eqref{eqn:ce_bounds_con} manifest that $0.8$bps/Hz $\le C_e \le {\mathpzc{R}}/{\mathbb E(X)} < \mathpzc{R}=4$bps/Hz. Whereas, the bounds of the outage effective capacity given by \eqref{eqn:ec_bound_fin_var} show that $0$bps/Hz $\le C_e \le {\mathpzc{R}\left( {1 - {p_{\mathcal K}}} \right)}/{\mathbb E(X)} < \mathpzc{R}=4$bps/Hz.
\begin{figure}
  \centering
  \includegraphics[width=3.2in,height=1.6in]{./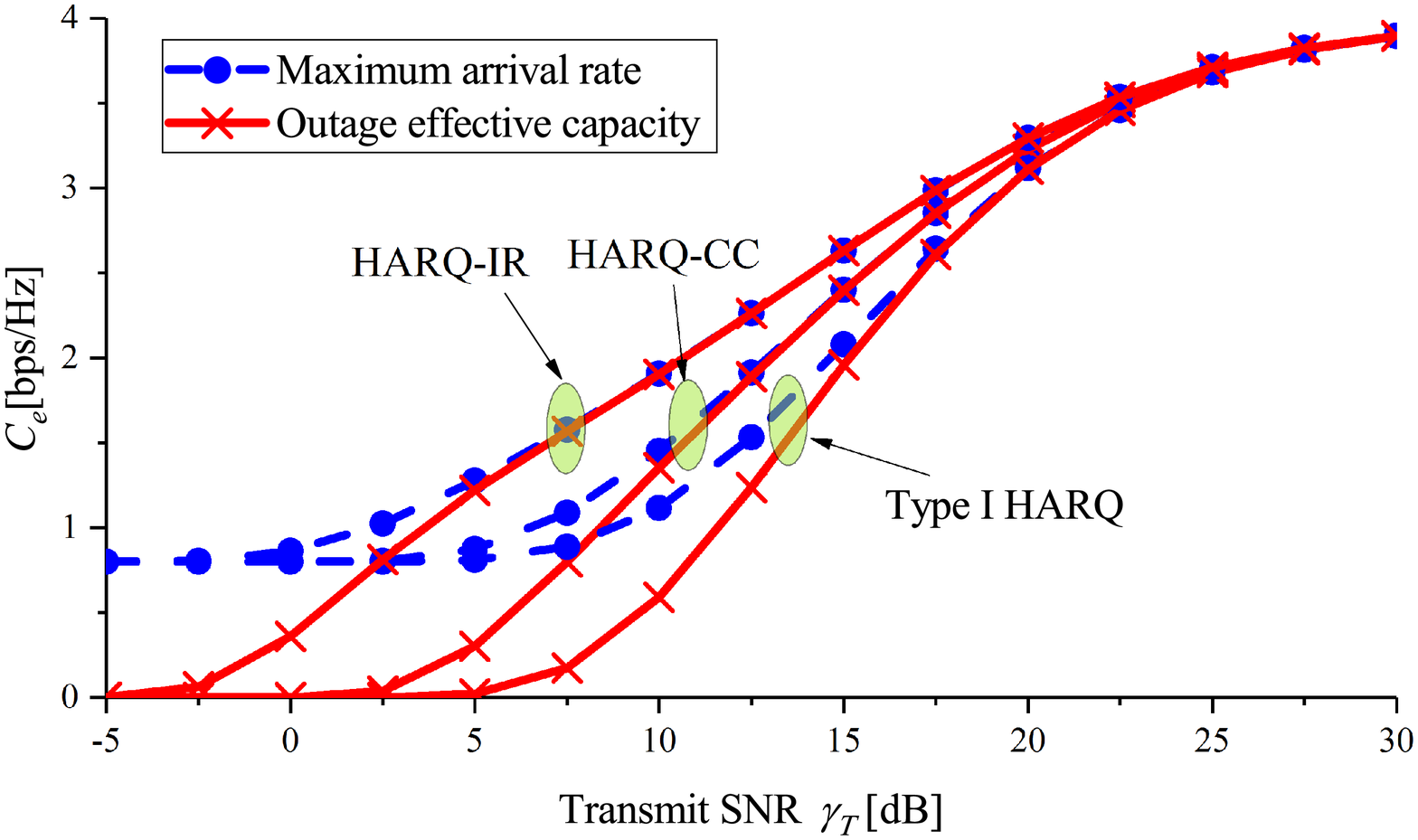}
  \caption{The comparison between the maximum arrival rate and the outage effective capacity with parameter $\theta=10^{-3}$.}\label{fig:outage_snr}
\end{figure}

\subsection{VR-HARQ Scheme}
Fig. \ref{fig:vr_theta} shows the effective capacity of the VR-HARQ schemes against the effect of the QoS exponent, and the notation ${\bf r}$ herein is defined as ${\bf r} = [\mathpzc R_1,\cdots,\mathpzc R_K]$. It is observed in Fig. \ref{fig:vr_theta} that the exact results perfectly agree with the approximate ones for small $\theta$. Similarly to Fig. \ref{fig:outage_theta}, the effective capacity decreases to zero as $\theta$ increases to infinity in Fig. \ref{fig:vr_theta}. Unsurprisingly, the increase of the transmit SNR improves the effective capacity. For example, for fixed values of ${\bf r} = [4,3,3,2,2]$bps/Hz and $\theta=10^{-4}$, the effective capacity increases by 1.5bps/Hz if the transmit SNR is increased from $10$dB to $20$dB. Furthermore, it is seen from this figure that the rate selection has a significant impact on the effective capacity.
\begin{figure}
  \centering
  \includegraphics[width=3.2in,height=1.6in]{./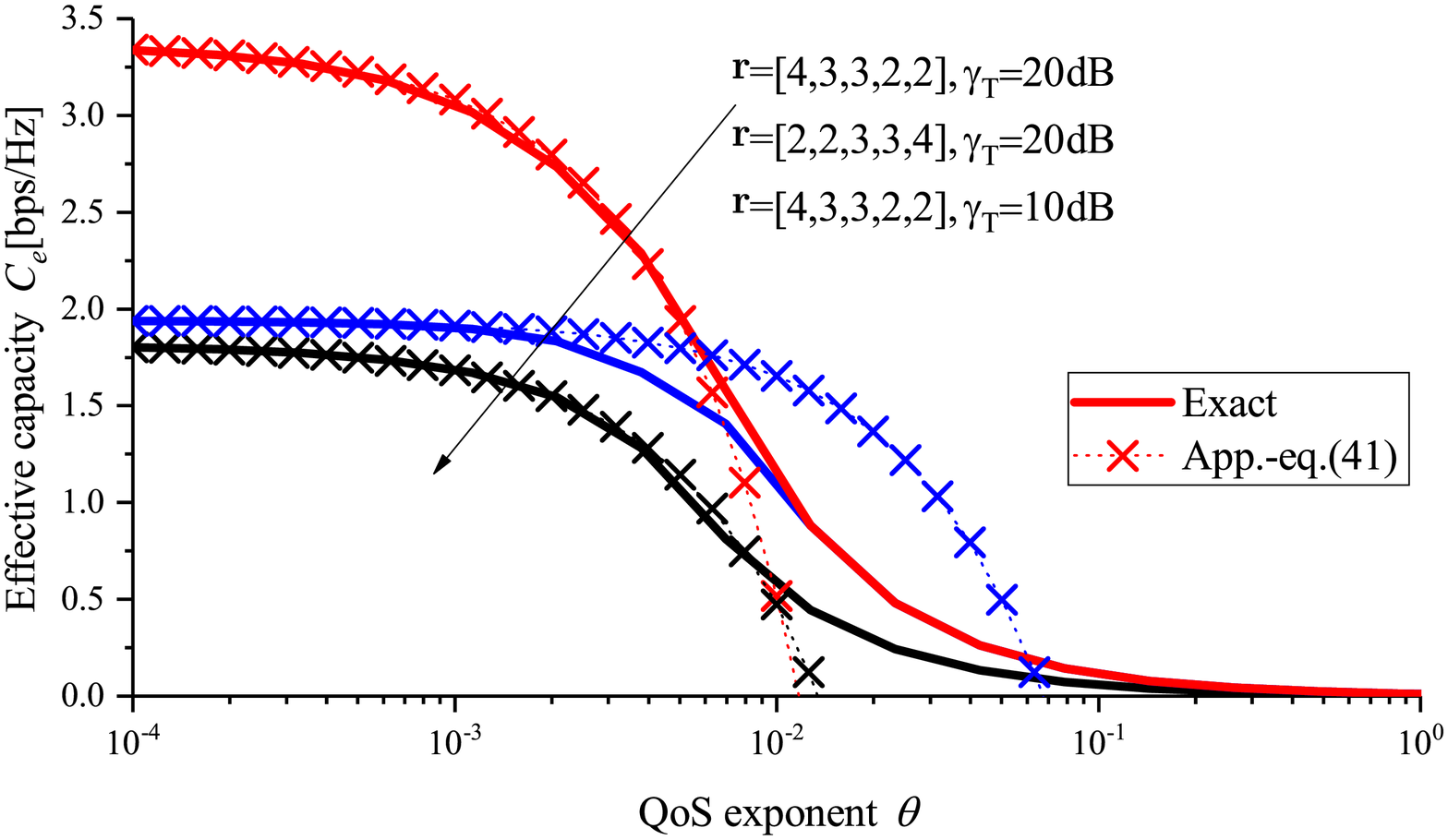}
  \caption{The effective capacity versus the QoS exponent for the VR-HARQ schemes.}\label{fig:vr_theta}
\end{figure}


\subsection{XP-HARQ Scheme}
The effective capacity for XP-HARQ scheme is plotted against the QoS exponent in Fig. \ref{fig:xp_snr}, wherein $\breve{\bf r} = [\breve {\mathpzc R}_1,\cdots,\breve{\mathpzc R}_K]$. It can be observed that there is a perfect agreement between the exact results and the approximate ones under a small $\theta$. Similarly, the effective capacity of the XP-HARQ scheme is a decreasing function of the QoS exponent as well as the transmit SNR. In addition, it is seen from Fig. \ref{fig:xp_snr} that the transmission rates of the XP-HARQ considerably influences the effective capacity.

\begin{figure}
  \centering
  \includegraphics[width=3.2in,height=1.6in]{./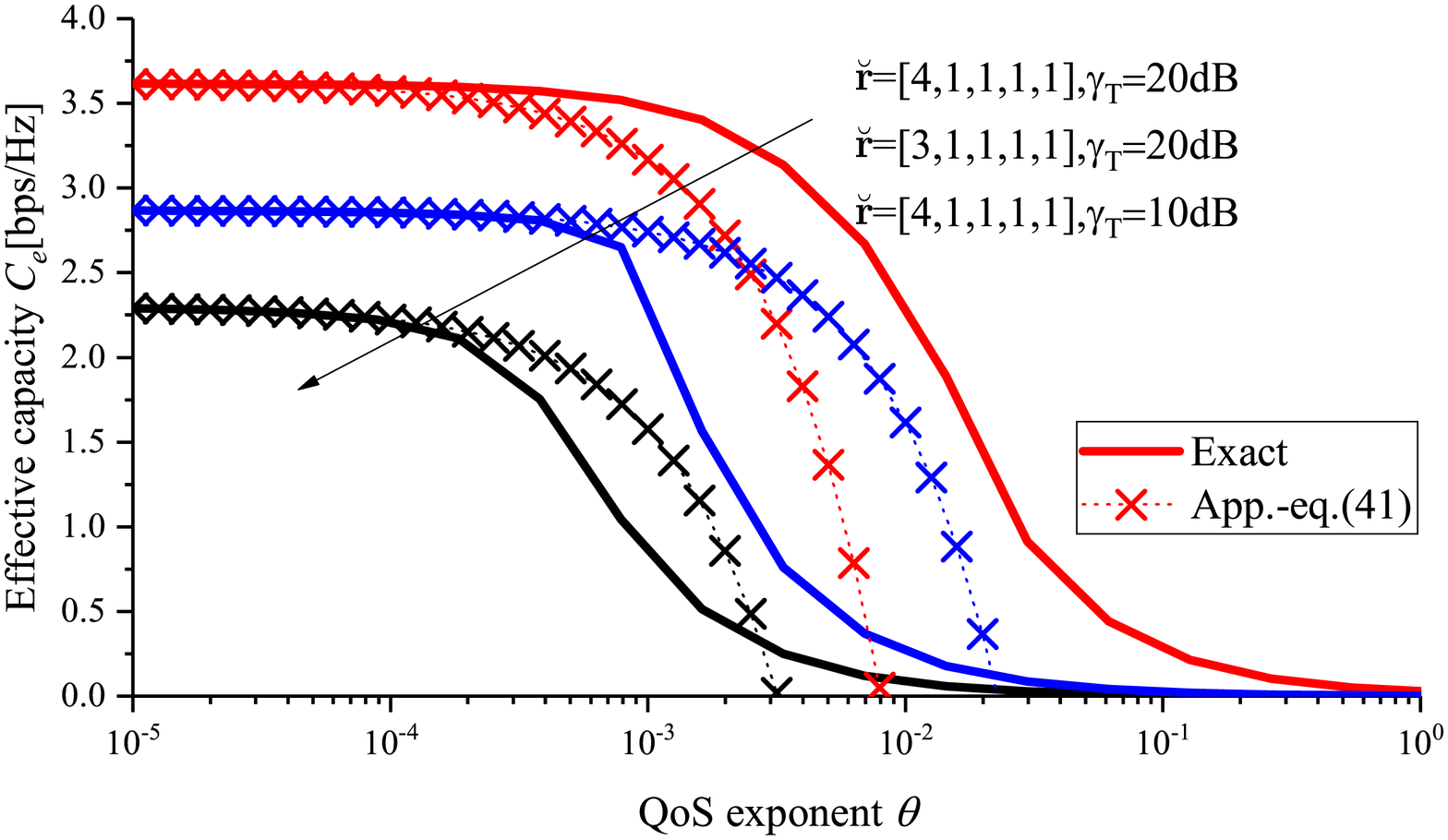}
  \caption{The effective capacity versus the QoS exponent for the XP-HARQ scheme.}\label{fig:xp_snr}
\end{figure}

Note that the selection of transmission rates has a critical impact on the effective capacity, the effective capacity can be maximized via optimizing transmission rates of the HARQ-IR scheme. To compare the three different HARQ-IR schemes, including FR-HARQ-IR, VR-HARQ and XP-HARQ, Fig. \ref{fig:com} exhibits their optimal effective capacities versus the transmit SNR. Nevertheless, the optimal design of transmission rates for HARQ schemes is out of the scope of this paper due to the non-convexity of the objective function. Similarly to \cite{jabi2017adaptive}, we conduct an exhaustive search over the space of following available transmission rates: FR-HARQ-IR and VR-HARQ adopt ${\mathpzc R}, {\mathpzc R}_k \in \{1.5,1.75,2,\cdots,3.75\}$bps/Hz, while XP-HARQ adopts $\breve{\mathpzc R}_1 \in \{1.5,1.75,2,\cdots,3.75\}$bps/Hz for the initial round and $\breve{\mathpzc R}_k \in \{0,0.25,0.5,\cdots,3.75\}$bps/Hz for the others. Fig. \ref{fig:com} reveals that VR-HARQ and XP-HARQ can achieve almost the same optimal effective capacity, while FR-HARQ performs the worst. This is because that VR-HARQ turns to the variable-length coding to fully exploit the statistical information of fading channels. Whereas, XP-HARQ attempts to incorporate new information bits into retransmission to make the utmost of the possible redundant mutual information.

\begin{figure}
  \centering
  \includegraphics[width=3.2in,height=1.6in]{./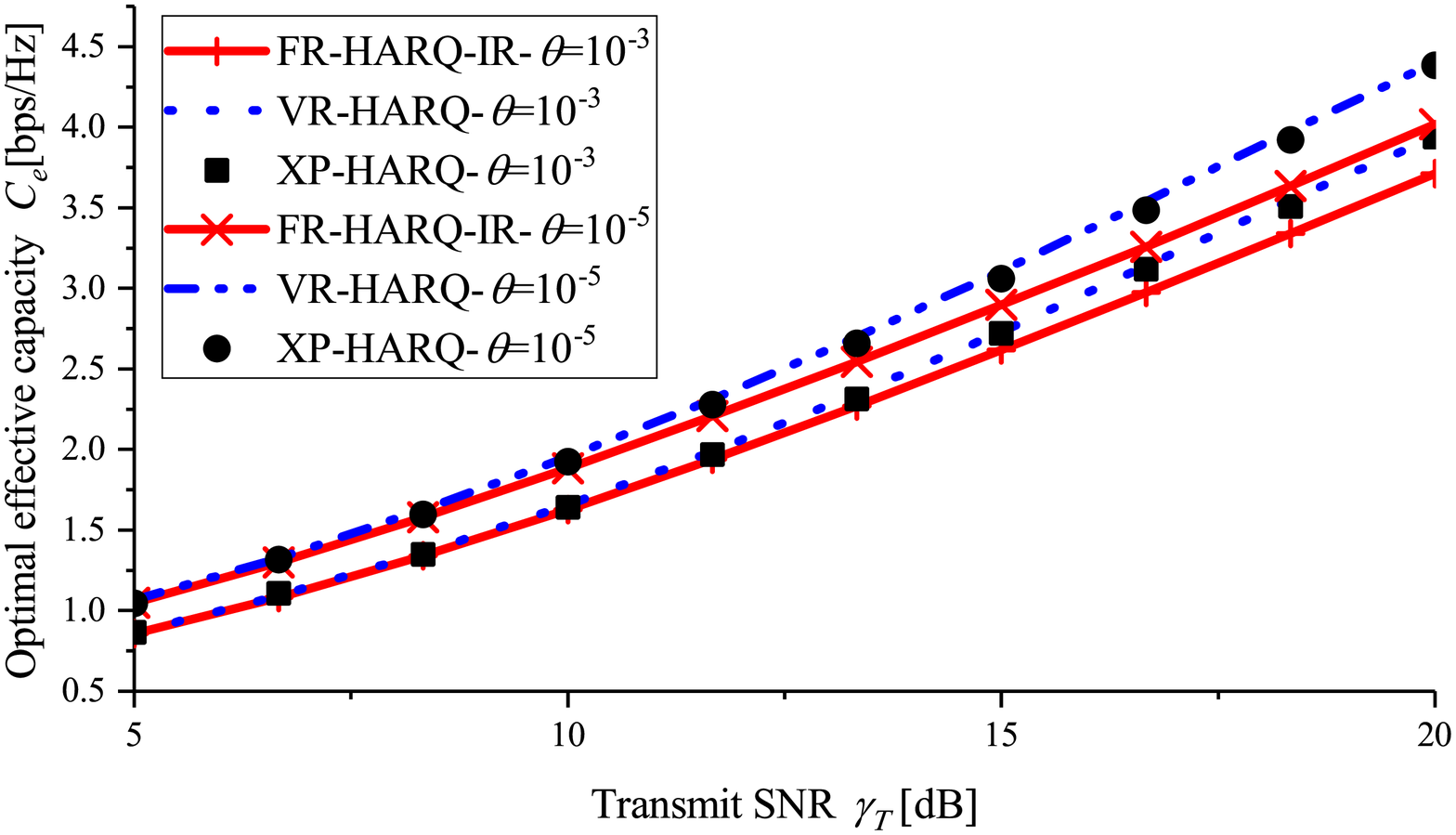}
  \caption{Optimal effective capacities of different HARQ-IR schemes with parameter $\mathcal K = 2$.}\label{fig:com}
\end{figure}

\section{Conclusions}\label{sec:con}
This paper has derived a unified exact formula for the effective capacity of the renewal network service process at a given QoS exponent, with which the cross-layer throughput for various HARQ systems has been accurately evaluated, including FR-HARQ (e.g., Type I HARQ, HARQ-CC and HARQ-IR), VR-HARQ and XP-HARQ. The formula not only has gained many insightful results, but also has paved the way for the exact cross-layer design by integrating the channel model of physical-layer to QoS requirements of link-layer. Specifically, the effective capacity has been found to decrease with QoS exponent as well as be bounded. Furthermore, if the transmission rates are optimally chosen to maximize the effective capacity, it has been shown that VR-HARQ and XP-HARQ achieve almost the same performance, and both of them surpass FR-HARQ.


\appendices
\section{Proof of Theorem \ref{the:inv_z_mat}}\label{app:inv_z_mat}
By using residue theorem, the inverse Z-transform of $\psi(z)$ can be expressed as
\begin{align}\label{eqn:mgf_nt_inv_sim}
{\cal Z}^{-1}\left\{ {{\psi}\left( z \right)} \right\}\left( t \right)
 &=  \sum\limits_{i = 1}^K
 {{\rm{Res}}\left\{ {\frac{{z^{t-1}\phi\left( z \right)}}{{\prod\limits_{k = 1}^K {\left( {z - {z_k}} \right)} }},z = {z_i}} \right\}}\notag\\
&=  \sum\limits_{i = 1}^K {\frac{{{z_i}^{t - 1}}\phi\left( z_i \right)}{{\prod\limits_{ j = 1, j \ne i}^K {\left( {z_i - {z_j}} \right)} }}},
\end{align}
where the notation of ${\rm Res}(f(x),u)$ stands for the residue of $f(x)$ at pole $x=u$. 
Identifying \eqref{eqn:mgf_nt_inv_sim} with the Laplace expansion of Vandermonde determinant yields
\begin{align}\label{eqn:mgf_rew_simpres_re}
 &{\cal Z}^{-1}\left\{ {{\psi}\left( z \right)} \right\}\left( t \right) \notag\\
 &= \frac{{\sum\limits_{i = 1}^K {{{\left( { - 1} \right)}^{K + i}}{{{z_i}^{t - 1}}\phi\left( z_i \right)}\prod\limits_{1 \le m \ne i < n \ne i \le K}^K {\left( {{z_n} - {z_m}} \right)} } }}{{\prod\limits_{1 \le m < n \le K}^K {\left( {{z_n} - {z_m}} \right)} }} \notag\\
 &= \frac{{\sum\limits_{i = 1}^K {{{{z_i}^{t - 1}}\phi\left( z_i \right)} {{{{F}}_{i,K}}} } }}{{\det {\bf{Z}}}}
 = \frac{{\det {\bf{F}} }}{{\det{\bf{Z}} }},
\end{align}
where ${{{{F}}_{i,j}}}$ represents the cofactor of the $(i,j)$-th entry of $\bf F$.

\section{Proof of Theorem \ref{the:dec_mon}}\label{app:dec_mon}
To complete the proof of Theorem \ref{the:dec_mon}, it will suffice to show that the first derivative of $\eta(x) = f(x)/x$ is larger than or equal to zero. The first derivative of $\eta(x)$ w.r.t. $x$ is
\begin{equation}\label{eqn:ce_first_der}
\eta^\prime(x) = \frac{1}{x }\left( {f'\left( x  \right) - \frac{{f\left( x  \right)}}{x }} \right).
\end{equation}


By repeatedly using mean value theorem, it follows that 
\begin{align}\label{eqn:ce_first_der_mvt}
\eta^\prime(x)
 &= \frac{1}{x }\left( {f^\prime \left( x  \right) - f^\prime({\alpha _1}x )} \right) \notag\\
 &= \left( {1 - {\alpha _1}} \right)f^{\prime \prime}\left( {{\alpha _1}x  + {\alpha _2}\left( {1 - {\alpha _1}} \right)x } \right).
\end{align}
where $\alpha_1,\alpha_2 \in (0,1)$. Clearly, the sign of $\eta^\prime(x)$ is determined by $f^{\prime \prime}(x)$. The monotonicity of $\eta(x)$ is thus proved.

\section{Bounds of $C_e$}\label{sec:app_prop}
Thanks to the monotonic decreasing function of $C_e$, $C_e$ is bounded as $\mathop {\lim }\nolimits_{\theta  \to \infty} {{u(\theta )}}/{\theta } \le {C_e} \le \mathop {\lim }\nolimits_{\theta  \to  0} {{u(\theta )}}/{\theta }$.
Clearly, the corresponding upper bound is $\mathop {\lim } \nolimits_{\theta  \to  0}{{u(\theta )}}/{\theta } = u^\prime (0)={R}/{{\mathbb E}(X)}$. With regard to the lower bound, \eqref{eqn:u_h_re} indicates ${e^{K\theta {C_e}}} \ge {e^{\theta R}} \ge {q_K}{e^{K\theta {C_e}}}$.
Accordingly, we have ${R}/{K} \le {C_e} \le {R}/{K} - {{\ln {q_K}}}/{{(K\theta) }}$. As $\theta$ approaches to infinity, using squeeze theorem yields $\mathop {\lim }\nolimits_{\theta  \to \infty} {{u(\theta )}}/{\theta }={R}/{K}$ if $q_K > 0$. The bounds of the effective capacity are consequently simplified into \eqref{eqn:ce_bounds_con}.
\section{Proof of \eqref{eqn:con_ec_ext}}\label{app:proof_cn_ec_ext}
Since $\mathbb E\left( {{{\zeta }^{ X}}} \right)$ can be approximated as
\begin{equation}\label{eqn:mgf_app}
\mathbb E\left( {{{\zeta }^{ X}}} \right)\approx \sum\limits_{k = 0}^K {{q_k}{\zeta ^{k\Delta x}}}  = \sum\limits_{k = 0}^K {{q_k}{{\left( {{\zeta ^{-\Delta x}}} \right)}^{-k}}}  = {\mathcal X_{{{\tilde X}_i}}}\left( z \right),
\end{equation}
where $z={\zeta ^{-\Delta x}}$ and the approximation becomes an equality as $\Delta \to 0$, the effective capacity of a renewal process by taking $\tilde X_i$ as the $i$-th interarrival time can be used to approximate $C_e$. Specifically, $C_e$ can be approximately obtained by using the same approach in Section \ref{sec:zt} as
\begin{align}\label{eqn:ce_app}
C_e &\approx - \mathop {\lim }\limits_{t \to \infty } \frac{1}{{\theta t}}\ln {\mathbb{E}}\left\{ {{e^{ - \theta  R{N_{t/\Delta x}}}}} \right\}\notag\\
&= \mathop {\lim }\limits_{t \to \infty } \frac{{\left( {\frac{t}{{\Delta x}} + K} \right)\ln {z}^{ - 1}}}{{\theta t}},
\end{align}
where $\sum\nolimits_{k = 0}^K {{q_k}{{z}^{ - k}}}  = {e^{\theta R}}$, i.e., ${\mathcal X_{{{\tilde X}_i}}}\left( z \right) = {e^{\theta R}}$. Due to the relationship between $\zeta$ and $z$, we have
\begin{align}\label{eqn:ce_app_rew}
C_e &= \mathop {\lim }\limits_{t \to \infty } \frac{{\left( {\frac{t}{{\Delta x}} + K} \right)\ln \zeta^{\Delta x}}}{{\theta_1 t}}=\frac{\ln\zeta}{\theta},
\end{align}
where $\mathbb E\left( {{{\zeta }^{ X}}} \right) = e^{\theta R}$.
\section{Proof of \eqref{eqn:varphi_zt}}\label{app:proof_varphi_zf}
The solution to the homogeneous difference equation \eqref{eqn:varphi_heq} can be solved by employing Z-transform. To this end, it is indispensable to know the initial conditions of the difference equation, and the initial conditions of the $K$-th order linear difference equation are $\varphi(0),\cdots,\varphi(K-1)$. In order to apply Z-transform, we define a new sequence $\varphi_K\left( t \right)$ by shifting the sequence $\varphi\left( {t} \right)$ to the left by $K$ such that $\varphi_K\left( t \right) = \varphi\left( {t + K} \right)$, where $t\ge 0$. Hence, \eqref{eqn:varphi_heq} can be rewritten in terms of $\varphi_K\left( t \right)$ as
\begin{equation}\label{eqn:y_heq}
\varphi_K\left( t \right) - \sum\limits_{\kappa = 1}^K {{a_\kappa}\varphi_K\left( {t - \kappa} \right)} = 0,\,t\ge 0,
\end{equation}
where $\varphi_K\left( -k \right)=\varphi(K-k)$ and $k\in[0,K-1]$.
Applying the translation property of Z-transform \cite[eq.12.4.1]{debnath2010integral} to the both sides of \eqref{eqn:y_heq} gives
\begin{align}\label{eqn:z_trans_varphiK}
&{\cal Z}\left\{ {{\varphi _K}\left( t \right)} \right\}\left( z \right) -\notag\\ &\sum\limits_{\kappa = 1}^K {{a_\kappa}{z^{ - \kappa}}\left( {{\cal Z}\left\{ {{\varphi _K}\left( t \right)} \right\}\left( z \right) + \sum\limits_{l =  - \kappa}^{ - 1} {{\varphi _K}\left( l \right){z^{ - l}}} } \right)}
 = 0,
\end{align}
where ${\cal Z}\left\{ {{\varphi _K}\left( t \right)} \right\}\left( z \right)$ stands for the Z-transform of ${{\varphi _K}\left( t \right)}$. Accordingly, ${\cal Z}\left\{ {{\varphi _K}\left( t \right)} \right\}\left( z \right)$ is simplified as
\begin{equation}\label{eqn:z_trans_sim_varphi}
{\cal Z}\left\{ {{\varphi _K}\left( t \right)} \right\}\left( z \right) = \frac{{\sum\limits_{\kappa = 1}^K {{a_\kappa}{z^{ - \kappa}}\sum\limits_{l = 1}^\kappa {\varphi \left( {K - l} \right){z^l}} } }}{{1 - \sum\limits_{\kappa = 1}^K {{a_\kappa}{z^{ - \kappa}}} }}.
\end{equation}
By applying the inverse Z-transform to \eqref{eqn:z_trans_sim_varphi}, we have
\begin{equation}\label{eqn:inver_z_varphiK}
{\varphi _K}\left( t \right) = \frac{1}{{2\pi \rm i}}\oint\nolimits_c {{\cal Z}\left\{ {{\varphi _K}\left( t \right)} \right\}\left( z \right){z^{t - 1}}dz}.
\end{equation}
Thus, $\varphi(t)$ can be obtained as
\begin{align}\label{eqn:inver_z_varphi_inv}
&\varphi \left( t \right) = \notag\\
&\frac{1}{{2\pi \rm i}}\oint\nolimits_c {\frac{{\sum\limits_{\kappa = 1}^K {\sum\limits_{l = 1}^\kappa {{a_\kappa}\varphi \left( {K - l} \right){z^{t + l - \kappa - 1}}} } }}{{{z^K} - \sum\limits_{\kappa = 1}^K {{a_\kappa}{z^{K - \kappa}}} }}dz} ,\,t\ge K.
\end{align}
With the definition of $\tilde z_1,\cdots,\tilde z_K$, it follows that ${ z^K} - \sum\nolimits_{\kappa = 1}^K {{a_\kappa}{ z^{K - \kappa}}}  = \prod\nolimits _{\kappa=1}^K (z-\tilde z_\kappa)$. By putting the latter into (\ref{eqn:inver_z_varphi_inv}) and using Theorem \ref{the:inv_z_mat}, \eqref{eqn:inver_z_varphi_inv} can be further rewritten as \eqref{eqn:inv_z_varphi_mat}.
\begin{figure*}[!t]
\begin{align}\label{eqn:inv_z_varphi_mat}
\varphi \left( t \right)
&= \frac{{\det \left[ {{{\left( {{\tilde z_i}^{j - 1}} \right)}_{i,1 \le j \le K - 1}},{{\left( {\sum\limits_{\kappa = 1}^K {\sum\limits_{l = 1}^\kappa {{a_\kappa}\varphi \left( {K - l} \right){{\tilde z_i}^{t + l - \kappa - 1}}} } } \right)}_{i,K}}} \right]}}{{\det \tilde{\bf{A}}}}\notag\\
&=\sum\limits_{l = 1}^K {\varphi \left( {K - l} \right)\frac{{\det \overbrace {\left[ {{{\left( {{\tilde z_i}^{j - 1}} \right)}_{i,1 \le j \le K - 1}},{{\left( {\sum\limits_{\kappa = l}^K {{a_\kappa}{{\tilde z_i}^{t + l - \kappa - 1}}} } \right)}_{i,K}}} \right]}^{{\tilde{\bf{B}}_{l}}}}}{{\det \tilde{\bf{A}}}}} .
\end{align}
\hrulefill
\end{figure*}
\section{Proof of the Properties of Effective Capacity}\label{app:pf_prop_ec}
\subsection{Decreasing Monotonicity of $C_e$ w.r.t. $\theta$}
By defining $\ell_t \left( \theta  \right) = -{\ln \mathbb E\left\{ {{e^{ - \theta {S_t}}}} \right\}}/t$, the effective capacity can be expressed as ${C_e} =  \mathop {\lim }\nolimits_{t \to \infty } {{{\ell _t}\left( \theta  \right)}}/{{\theta }}$. More specifically, $\ell_t \left( 0  \right)=0$ and the second derivative of $\ell_t \left( \theta  \right)$ w.r.t. $\theta$ is less than or equal to zero because
\begin{align}\label{eqn:ell_second_der}
\ell_t^{\prime \prime}\left( \theta  \right) &= -\frac{{\mathbb E\left\{ {{S_t}^2{e^{ - \theta {S_t}}}} \right\} \mathbb E\left\{ {{e^{ - \theta {S_t}}}} \right\} - {{\left( {\mathbb E\left\{ {{S_t}{e^{ - \theta {S_t}}}} \right\}} \right)}^2}}}{{{{\left( {\mathbb E\left\{ {{e^{ - \theta {S_t}}}} \right\}} \right)}^2}}t} \notag\\
&\le 0,
\end{align}
where the last step holds by using Cauchy-Schwarz inequality. Hence, we get $\mathop {\lim }\nolimits_{t \to \infty } {{{\ell _t}\left( 0  \right)}} = 0$ and $\mathop {\lim }\nolimits_{t \to \infty } {{{\ell _t}^{\prime \prime}\left( \theta  \right)}} \ge 0$. By using Theorem \ref{the:dec_mon}, the decreasing monotonicity of $C_e$ w.r.t. $\theta$ is thus proved.

\subsection{Bounds of the Effective Capacity}
Owing to the decreasing monotonicity of the effective capacity, the effective capacity is upper bounded as ${C_e} \le \mathop {\lim }\limits_{\theta  \to  0} C_e$, where the upper bound is given by
\begin{equation}\label{eqn:lower_ec_var}
\mathop {\lim }\limits_{\theta  \to  0} C_e = \mathop  {\lim }\limits_{\theta  \to  0} \frac{{\tilde u(\theta )}}{\theta } = \tilde u '(0)=\frac{{\mathbb E\left( {{{\mathcal R(X,\frak S)}}} \right)}}{{\mathbb E\left( X \right)}}.
\end{equation}
On the other hand, we first define the lowest reward as $R_{\hat \kappa,\hat j}$, where $(\hat \kappa,\hat j) = \mathop {\arg }\nolimits_{\left( {\kappa ,j} \right)} \min\nolimits_{q_{\kappa ,j}>0} {R_{\kappa ,j}}$. Hence, the effective capacity satisfies
\begin{equation}\label{eqn:ec_var_eq_rew}
\sum\limits_{\kappa  = 1}^K {\sum\limits_{j = 1}^{{v_\kappa }} {{q_{\kappa ,j}}{e^{ - \theta \left( {{R_{\kappa ,j}} - {R_{\hat \kappa ,\hat j}}} \right)}}{e^{\kappa \theta C_e}}} }  = {e^{\theta {R_{\hat \kappa ,\hat j}}}}.
\end{equation}
From \eqref{eqn:ec_var_eq_rew}, it follows that
\begin{equation}\label{eqn:ec_ine_var}
{q_{\hat \kappa ,\hat j}}{e^{\hat \kappa \theta {C_e}}} \le \sum\limits_{\kappa  = 1}^K {\sum\limits_{j = 1}^{{v_\kappa }} {{q_{\kappa ,j}}{e^{ - \theta \left( {{R_{\kappa ,j}} - {R_{\hat \kappa ,\hat j}}} \right)}}{e^{\kappa \theta {C_e}}}} }  \le {e^{K\theta {C_e}}}.
\end{equation}
Thus $C_e$ is bounded as
\begin{equation}\label{eqn:ec_bound_gen}
\frac{{{R_{\hat \kappa ,\hat j}}}}{K} \le {C_e} \le \frac{{{R_{\hat \kappa ,\hat j}}}}{{\hat \kappa }} - \frac{{\ln {q_{\hat \kappa ,\hat j}}}}{{\hat \kappa \theta }}.
\end{equation}
Combining \eqref{eqn:lower_ec_var} and \eqref{eqn:ec_bound_gen} leads to \eqref{eqn:ec_bound_fin_var}. Consequently, we complete the proof.

\section{Proof of \eqref{eqn:pk_F_G_foxh}}\label{app:pk_hat}
From \eqref{eqn:out_prob_def}, $\hat p_k$ can be rewritten as
\begin{equation}\label{eqn:p_hat_rew}
{{\hat p}_k} = {\rm{Pr}}\left( {\underbrace {\prod\nolimits_{l = 1}^k {{{\left( {1 + {\gamma _l}} \right)}^{\frac{1}{{{\mathpzc R_l}}}}}} }_{{G_k}} < 2} \right)
 = {F_{{G_k}}}\left( 2 \right),
\end{equation}
where $ {F_{{G_k}}}(x)$ refers to the CDF of $G_k$. By using the Mellin transform, the Mellin transform w.r.t. $ {F_{{G_k}}}(x)$ can be obtained by using \cite[eq.8.3.15]{debnath2010integral} as
\begin{equation}\label{eqn:mellin_trans_}
\left\{ {\mathcal M{F_{{G_k}}}} \right\}\left( s \right) = -\frac{1}{s} \left\{ {\mathcal M{f_{{G_k}}}} \right\}\left( s+1 \right),
\end{equation}
where $ {f_{{G_k}}}(x)$ denotes the PDF of $G_k$. The Mellin transform w.r.t. $ {f_{{G_k}}}(x)$ can be expressed as
\begin{align}\label{eqn:mellin_pdf_f}
\left\{ {\mathcal M{f_{{G_k}}}} \right\}\left( s \right)&=\mathbb E \left\{ {G_k}^{s-1}\right\}
\mathop  = \limits^{\left( a \right)}  \prod\limits_{l = 1}^k {\mathbb E \left\{{{\left( {1 + {\gamma _l}} \right)}^{\frac{{s - 1}}{{{\mathpzc R_l}}}}}\right\}  }\notag\\
&= \prod\limits_{l = 1}^k {\int\nolimits_0^\infty  {{{\left( {1 + x} \right)}^{\frac{{s - 1}}{{{\mathpzc R_l}}}}}{f_{{\gamma _l}}}\left( x \right)dx} }.
\end{align}
where step (a) holds because of the independence among fading channels. Note $\gamma_l = \gamma_T \alpha_l$ and $\alpha_l \sim \mathcal G(m_l, \Omega_l/m_l)$, the PDF of $\gamma_l$ is given by
\begin{equation}\label{eqn:gamma_l_pdf}
{f_{{\gamma _l}}}\left( x \right) = {\left( {\frac{{{m_l}}}{{{\gamma _T}{\Omega _l}}}} \right)^{{m_l}}}\frac{{{x^{{m_l} - 1}}{e^{ - \frac{{{m_l}}}{{{\gamma _T}{\Omega _l}}}x}}}}{{\Gamma \left( {{m_l}} \right)}}.
\end{equation}
By substituting \eqref{eqn:gamma_l_pdf} into \eqref{eqn:mellin_pdf_f} yields
\begin{multline}\label{eqn:mellin_trans_pdf_tric}
\left\{ {{\cal M}{f_{{G_k}}}} \right\}\left( s \right) = \prod\limits_{l = 1}^k {{\left( {\frac{{{m_l}}}{{{\gamma _T}{\Omega _l}}}} \right)}^{{m_l}}}\\
 \times \frac{1}{{\Gamma \left( {{m_l}} \right)}}\int\nolimits_0^\infty  {{e^{ - \frac{{{m_l}}}{{{\gamma _T}{\Omega _l}}}x}}{x^{{m_l} - 1}}{{\left( {1 + x} \right)}^{\frac{{s - 1}}{{{\mathpzc R_l}}}}}dx}  \\
 = \prod\limits_{l = 1}^k {{{\left( {\frac{{{m_l}}}{{{\gamma _T}{\Omega _l}}}} \right)}^{{m_l}}}\Psi \left( {{m_l},\frac{{s - 1}}{{{\mathpzc R_l}}} + {m_l} + 1;\frac{{{m_l}}}{{{\gamma _T}{\Omega _l}}}} \right)}.
\end{multline}
Plugging the latter into \eqref{eqn:mellin_trans_} and then applying inverse Mellin transform, $ {F_{{G_k}}}(x)$ can be expressed as
\begin{align}\label{eqn:F_g_mellin_inv}
&{F_{{G_k}}}\left( x \right) = \frac{1}{{2\pi \rm i}}\int\nolimits_{\tilde {\mathcal C}} {\left\{ {{\cal M}{F_{{G_k}}}} \right\}\left( s \right){x^{ - s}}ds}\notag\\
&  =- \prod\limits_{l = 1}^k {{{\left( {\frac{{{m_l}}}{{{\gamma _T}{\Omega _l}}}} \right)}^{{m_l}}}}  \notag\\
&\times\frac{1}{{2\pi \rm i}}\int\nolimits_{\tilde {\mathcal C}} {\frac{1}{s}\prod\limits_{l = 1}^k {\Psi \left( {{m_l},\frac{s}{{{\mathpzc R_l}}} + {m_l} + 1;\frac{{{m_l}}}{{{\gamma _T}{\Omega _l}}}} \right)} {x^{ - s}}ds} ,
\end{align}
where $\tilde {\mathcal C}$ is a Mellin-Barnes contour path. By using \cite[eq.55]{shi2017asymptotic}, ${F_{{G_k}}}\left( x \right)$ can thus be expressed in terms of the generalized Fox's H function \cite{yilmaz2010outage} as \eqref{eqn:F_G_foxh}, shown at the top of the next page. By putting \eqref{eqn:F_G_foxh} into \eqref{eqn:ec_bound_gen}, $\hat p_k$ is consequently obtained as \eqref{eqn:pk_F_G_foxh}.
\begin{figure*}[tbp]
\begin{equation}\label{eqn:F_G_foxh}
 {F_{{G_k}}}\left( x \right) = Y_{1,k + 1}^{k,1}\left[ {\left. {\begin{array}{*{20}{c}}
{\left( {1,1,0,1} \right)}\\
{\left( {1,\frac{1}{{{\mathpzc R_1}}},\frac{{{m_1}}}{{{\gamma _T}{\Omega _1}}},{m_1}} \right), \cdots ,\left( {1,\frac{1}{{{\mathpzc R_k}}},\frac{{{m_k}}}{{{\gamma _T}{\Omega _k}}},{m_k}} \right),\left( {0,1,0,1} \right)}
\end{array}} \right|x\prod\limits_{l = 1}^k {{{\left( {\frac{{{m_l}}}{{{\gamma _T}{\Omega _l}}}} \right)}^{\frac{1}{{{\mathpzc R_l}}}}}} } \right],
\end{equation}
\hrulefill
\end{figure*}
\bibliographystyle{ieeetran}
\bibliography{Asy_ana}
\end{document}